\newcommand{\bs}{\boldsymbol}
\newtheorem{theorem}{Theorem}
\newtheorem{proposition}{Proposition}
\newtheorem{lemma}{Lemma}
\newtheorem{definition}{Definition}
\newtheorem{assumption}{Assumption}
\newtheorem{corollary}{Corollary}
\newtheorem{mechanism}{Mechanism}
\def\BibTeX{{\rm B\kern-.05em{\sc i\kern-.025em b}\kern-.08em
    T\kern-.1667em\lower.7ex\hbox{E}\kern-.125emX}}
\newcommand{\com}[1]{\textbf{\color{red} (Comment: #1) }}
\newcommand{\comg}[1]{\textbf{\color{blue} (COMMENT: #1)}}
\newcommand{\response}[1]{\textbf{\color{blue} (RESPONSE: #1)}}
\newcommand{\com}[1]{}
\newcommand{\comg}[1]{}
\newcommand{\response}[1]{}
\title{Mechanism Design for Large Scale \\Network Utility Maximization}
 \author{Meng Zhang\IEEEauthorrefmark{2} and Deepanshu Vasal\IEEEauthorrefmark{2}\footnote{meng.zhang@northwestern.edu, dvasal@umich.edu}\\ 
 \IEEEauthorblockA{\IEEEauthorrefmark{2}Department of Electrical and Computer Engineering, Northwestern University}\vspace{-12pt}\\
}
\begin{document}

\maketitle
\begin{abstract}
Network utility maximization (NUM) is a general framework for designing distributed optimization algorithms for large-scale networks. An economic challenge arises in the presence of strategic agents' private information.
Existing studies proposed (economic)
\textit{mechanisms} but largely neglected the issue of large-scale  implementation. Specifically, they require certain modifications to the deployed algorithms, which may bring the significant cost.
To tackle this challenge, we present
the large-scale Vickery-Clark-Grove (VCG) Mechanism for NUM, with a simpler payment rule  characterized by the shadow prices.
The Large-Scale VCG Mechanism maximizes the network utility and
achieves individual rationality and budget balance. With infinitely many agents, agents' truthful reports of their types are their dominant strategies; for the finite case, each agent's incentive to misreport converges quadratically to zero.
For practical implementation,
we introduce a modified mechanism that possesses an additional important technical property, \textit{superimposability}, which makes it able to be built upon 
 any (potentially distributed) algorithm that optimally solves the NUM Problem and ensures all agents to obey the algorithm.
We then extend this idea to the dynamic case, when agents' types are dynamically evolving as a controlled Markov process. In this case, the mechanism leads to incentive compatible actions of agent for each time slot.
\end{abstract}
\section{Introduction}
%There is a theory which states that if ever anyone discovers exactly what the Universe is for and why it is here, it will instantly disappear and be replaced by something even more bizarre and inexplicable.
%There is another theory which states that this has already happened.
\subsection{Motivations}
Internet is ubiquitous these days with ever-increasing number of smartphones and computers. Network utility maximization (NUM) has been developed as a general framework for scheduling and allocating network resources (e.g., bandwidth, computing power, and storage) to users \cite{NUM,NUM2}. A typical NUM aims to maximize the network utility (the aggregate agents' utility) subject to some network constraints. For instance, the well-known transmission control protocol (TCP) can be interpreted as a distribution solution to some optimization problem in the form of the basic NUM \cite{NUM2}.

The \textit{large scale} of a  networked system makes it necessary to apply a \textit{distributed algorithm} to a NUM problem for efficient network resource allocation. Specifically, since the components of a network are owned by different entities (agents) in general, 
the network designer does not have the complete network information to solve the NUM directly. Even if agents are obedient to share their respective information, gathering such information is impractical  due to the unaffordable communication overheads. Fortunately, the NUM problems in general can be solved in a distributed manner, mainly because of the possibility to decompose it into several subproblems. With such a decomposable structure, one can design an algorithm in which agents solve their individual subproblems coordinated by proper signalings (often in the form of dual variables \cite{NUM}).

Distributed optimization algorithms presume   \textit{obedient} agents  that are willing to follow the algorithms. However,
agents in practice can be strategic and self-interested, i.e., their own objectives differ from the system-level objective. Therefore, an agent
may have incentive to misreport such information so as to manipulate the system outcomes to their own advantages.

There is a large and very complex interaction of people with each other, with the government and the firms. 
Thus it is quite an important problem to design large-scale networks such that when interacted upon by people, who are self-interested agents and try to maximize their own utilities, the outcome is desirable to the planner. 
Game theory is a powerful tool to analyze behaviors among strategic agents. An engineering side of game theory is mechanism design which aims to design systems such that when played on by strategic agents who optimize their individual objectives, they achieve the same objective as envisioned by the designer. 
One of the most widely used examples of mechanism design used in the real-world is auctions~\cite{My81} where an auctioneer asks for bids by the agents. The auction is designed in such a way that when the strategic bidders bid on the value to maximize their own valuations, it maximizes the returns of the auctioneer. Other than the problem of maximizing the utility of the system designer, another common formulations in Mechanism design is to maximize sum of the utilities of all the agents, also called social welfare or network utility \cite{NUM}.
% , or maximizing revenue for the designer, called profit maximizing. 

\subsection{Solution Approach}
One approach to preventing such manipulation is \textit{mechanism design}, which aims to design games such that when played by the self-interested agents while the induced game-theoretic equilibrium leads to an efficient allocation. 
 One such celebrated mechanism is the Vickery-Clark-Grove (VCG) Mechanism~\cite{Vickery,Clarke,Groves} that ensures that truthfully revealing each agent's information is a dominant strategy. 
There are two main drawbacks that prevent the VCG Mechanism from being practically implemented in large-scale systems, namely unaffordable communications and computation overheads. In particular, the VCG Mechanism may incur significantly communication overheads since it requires each agent to submit its entire utility function carrying infinitely many messages. Related research efforts (e.g. \cite{Yang,JoTs,Jain,Kakhbod,Sharma,Sinha,GeBerry,Zhang}) have been proposed towards resolving such an issue, in which case users are only required to submit limited messages while the network utility is maximized at the induced equilibria.
The second main issue of the VCG Mechanism that this work considers is the computation issue:
 as the number of agents becomes large, it becomes increasingly difficult to compute the optimal outcomes and implement the VCG mechanism. 

This work aims at resolving the strategic manipulation issue
by extending the idea of strategy-proofness in the large in \cite{large} to the networked systems. Specifically, in a large networked system, each agent regards the ``(shadow) prices'' (the dual variables corresponding to the NUM Problem) as exogenous. Hence, the resulted price-taking behaviors lead to agent willingness to truthfully report. Such a property leads to simple payment rules and hence can relieve the aforementioned drawbacks of the VCG Mechanism.

However, there is another overlooked challenge unsolved, namely the \textit{implementation cost} of mechanisms. Specifically, all existing economic mechanisms proposed in the literature are built upon specific algorithms  or require certain modifications to the existing algorithms \cite{Yang,JoTs,Jain,Kakhbod,Sharma,Sinha,GeBerry,Zhang}). Such a requirement may bring tremendous obstruction for practical implementation: network designers may not be willing merely to ensure incentive compatibility at the cost of modifying algorithms that are already widely adopted (e.g., TCP), which is too costly for large-scale systems. 

As such, this paper also aims at satisfying a new technical property termed \textit{superimposability}. Specifically, we will design a mechanism for large-scale networks that ensures the incentive compatibility of any (potentially distributed) algorithm that maximizes the network utility, by superimposing a payment rule on it; the computation of the payment rule is simple and does not require modifications to the algorithm itself. 
To the best of our knowledge, the only existing superimposable mechanism is the Groves Mechanism \cite{Groves},  but it suffers from severe budget deficits (the network designer needs to inject a large amount of money into the system) and hence is not favorable.

We further extend the idea of the static mechanism (i.e. when agents have a static type and network designer solves a single shot optimization problem) to the dynamic setting when the agents' types are evolving through a controlled Markov process, controlled by their own actions. This is motivated by many practical scenarios where agents repeatedly interact with the central planner while their own preferences of network resources change over time. A similar problem with a finite number of agents was studied in~\cite{BeVa10} where authors extended the idea of the VCG/Pivot Mechanisms to the dynamic setting, and in~\cite{IyJoSu11} where there are infinitely many agents interacting with each other in a bidding mechanism through a mean field. 

% The significance of the work by~\cite{My81,Vickery,Clarke,Groves} is that while conceptually and mathematically they are quite simple, they provide solutions to real world problems that are implemented in practice and thus have significant impact. In the same spirit, in this paper, we attempt to provide a real world solution to the problem of distributing a private good to large number of self interested agents  whose valuations are their private information.
% More formally, we consider the problem of mechanism design where there are a large number of homogeneous players, where each player has a type which is her private information. She takes an action to report of her type to the system designer who, based on everyone's type, allocates good and prices to the individuals. 
\subsection{Contributions}
We summarize our contributions as follows.
\begin{itemize}
    \item \textit{Large-Scale Mechanism Design.} We present the Large-Scale VCG Mechanism which optimally solves a general form of an NUM problem for multi-resource allocation/scheduling.
    It incentivizes agents to truthfully report their types and achieves individual rationality (IR) and budget balance (BB). 

    \item \textit{Convergence Speed.} In a network with a finite number $I$ of agents, we show that each agent's incentive to misreport its type converges to zero at a speed of $\mathcal{O}(1/I^2)$.
    \item \textit{Superimposability.} For practical implementation, we propose a superimposable modification to the Large-Scale VCG Mechanism. That is, itcan be built upon 
 any (potentially distributed) algorithm that optimally solves the NUM Problem and ensures that all agents are willing to obey the algorithm.
    \item \textit{Dynamic Generalizations.} We generalize our mechanism to a dynamic setting
    where types of the players are dynamically evolving through a controlled Markov process.  The dynamic version
    % In this case, we extend our previous mechanism to this dynamic setting such that revealing one's type in each time period and 
    maintains all the properties.
\end{itemize}

The paper is structured as follows. Section \ref{sec:relate} reviews the related studies in the literature.
In Section~\ref{sec:Model} we present the system model and formulate the NUM Problem. In Section~\ref{sec:Mech1}, we present the large-scale mechanism executed in a centralized manner. In Section~\ref{Mech:Impo}, we consider superimposable large-scale mechanism.
In Section~\ref{sec:Mech2}, we consider the case when player's types are dynamically evolving and extend results to such a dynamic setting. We conclude in Section~\ref{sec:Concl}.

 \section{Literature Review}\label{sec:relate}

Extensive studies have been design distributed optimization algorithms for solving NUM Problems without economic concerns (e.g., \cite{NUM,NUM2,VCG3}). For  economic mechanism design (and the corresponding optimization algorithm design),
 the one-shot dominant-strategy allocation mechanism (e.g., the VCG Mechanism) is unfavorable for the NUM problems since it requires significant communications overheads to describe each agent's utility function. 
 Therefore, research efforts on mechanism design for NUM-like problems for allocating divisible network resources, mainly focus on designing Nash mechanisms \cite{Yang,JoTs,Jain,Kakhbod,Sharma,Sinha,Zhang}) (i.e., inducing Nash equilibrium that maximizes the network utility). One exception is \cite{GeBerry}, which proposed a mechanism to quantize each agent's utility function achieving approximately maximization of the network utility. Related studies also considered the issues of
 budget balance (e.g., \cite{VCG1}), combinatorial resource allocations  (e.g., \cite{VCG2}), privacy concerns (e.g., \cite{Kearns}), and dynamic settings (e.g., \cite{IyJoSu11})
 for the VCG Mechanisms in the NUM framework. A recent economic paper \cite{large} studied mechanism design as the number of agent approaches infinity, but without consideration of network constraints. In addition, our mechanism guarantees a faster convergence rate than \cite{large}.

 Another related line of work is the \textit{faithful implementation} of NUM-like algorithms \cite{faith1,faith2,faith3}. These studies designed mechanisms in such a way that  agents are willing to follow certain algorithms that solve the NUM-like problems. Different from the superimposability property considered in our paper, these faithful mechanisms require certain modifications to the original NUM algorithms. For instance, the faithful mechanism in \cite{faith1} requires to add NUM algorithms auxiliary NUM problems, with each problem excluding one specific agent. Hence, network designers may not be willing merely to ensure incentive compatibility at the cost of modifying algorithms that are already widely adopted.
 
\section{System Model and Problem Formulation}
\label{sec:Model}

In this section, we introduce a framework of Network Utility Maximization (NUM). We first describe various components of the model and then present the NUM problem. 
We will then present a well-known VCG Mechanism as a benchmark.

\subsection{System Model}
\subsubsection{System Overview}
We consider a networked system with a system designer (e.g., a network operator),
a set $\mathcal{I}=\{1,...,I\}$ of agents (e.g., mobile users) and a set $\mathcal{N}=\{1\leq n\leq N\}$ of network resources (e.g., bandwidth of different network).

\subsubsection{Agent Modeling}
Each agent $i$ is assigned a two-dimensional 
type $\theta_i\in\mathcal{T}$ and $\zeta_i\in\mathcal{Z}$. 
 In particular, $\theta_i$
characterizes its \textit{willingness to pay} and/or \textit{desire for the resources};
$\zeta_i$ characterizes agent $i$'s \textit{network configuration}. Parameters $\theta_i$ and $\zeta_i$ determines agent's $i$ utility and the constraints in the network to be defined soon. We assume that  $|\mathcal{T}|$ and $|\mathcal{Z}|$ are finite. 
We further define $\bs{\rho}\triangleq\{\rho_{\theta,\zeta}\}$ such that
 $\rho_{\theta,\zeta}$ is the portion of agents having type ($\theta,\zeta$). 
Each agent $i$'s type ($\theta_i,\zeta_i$) is its own private information that it may not be willing to share with others.

\subsubsection{Resource Allocation} 

The system designer decides on allocating (network) resources to each agent.
Let $\bs{x}_i\triangleq\{x_{i,n}\}_{n\in\mathcal{N}}$ denote the resources allocated to agent $i$; $x_{i,n}$ is the amount of resource $n$ allocated to
agent $i$. 
 A type-($\theta_i,\zeta_i$) agent's is assigned a utility function $U(\theta_i,\bs{x}_i)$, which is increasing and strictly concave in $\bs{x}_i$.
 
We consider general  resource capacity constraints given by
\begin{align}
    \sum_{i\in\mathcal{I}}f_{\zeta_i,n}(x_{i,n})\leq C_n,~\forall n\in\mathcal{N},\label{constraint11}
\end{align}
where $f_{\zeta_i,n}(x_{i,n})$ denotes the agent $i$'s \textit{influence} to the system constraint $n$, and $C_n$ represents the capacity of resource $n$. This constraint can capture, for instance, a network resource allocation budget constraint. We assume that $f_{\zeta,n}(x_{i,n})$ is increasing and convex in $x_{i,n}$ and $f_{\zeta,n}(0)=0$ for all $\zeta\in\mathcal{Z}$ and $n\in\mathcal{N}$. We also adopt the following assumption throughout this paper:
\begin{assumption}[Monitorable Influence]\label{Assum1}
After each agent $i$ receives its allocation $\bs{x}_i$, the network designer or some other agent can observe the output values of its influence functions $f_{\zeta_i,n}(x_{i,n})$ for all $n$.
\end{assumption}
For instance, without the need of knowing the exact $\zeta_i$ and $x_{i,n}$,
one can observe some agent $i$'s resource consumption or generated interference via network monitoring measurements and various other key performance indicators \cite{KPI}. Such an assumption is also adopted in \cite{Zhang}.

\subsection{Problem Formulation}
The system designer aims at solving the following NUM Problem:
\begin{subequations}\label{NUM}
\begin{align}
  {\rm NUM:}~~~~\max_{\bs{x}} &\quad \sum_{i\in\mathcal{I}}~U(\theta_i,\bs{x}_i) \\
{\rm s.t.} &\quad \eqref{constraint11}
\end{align}
\end{subequations}
Let $(\bs{x}^o,\bs\lambda^o)$ be the optimal primal-dual solution to \eqref{NUM}.
The NUM Problem being convex and the objective function being strictly concave admits a unique optimal solution $\bs{x}^o\triangleq \{\bs{x}_{i}^o\}_{i\in\mathcal{I}}$. In addition, one can verify that the KKT (Karush–Kuhn–Tucker) conditions \cite{convex} ensure that $\bs{x}_{i}^o=\bs{x}_{j}^o$ if $(\theta_i,\zeta_i)=(\theta_j,\zeta_j)$, i.e., agents with the same type should receive the same amount of resources.
With this observation, we can further transform the NUM Problem by fixing the resource allocation for the same type of agents. Specifically,
Let $\bs{z}=\{\bs{z}_{\theta,\zeta}\}_{\theta\in\mathcal{T},\zeta\in\mathcal{Z}}$ denote the resource allocated to each type-$(\theta,\zeta)$ agent, i.e., $\bs{z}_{\theta_i,\zeta_i}=\bs{x}_i$ for all $i\in\mathcal{I}$.
One can then formulate the Transformed NUM (T-NUM) Problem into the following:
\begin{subequations}\label{T-NUM}
\begin{align}
  {\rm T-NUM:}~~\max_{\bs{z}} &\quad \sum_{\theta\in\mathcal{T}}\sum_{\zeta\in\mathcal{Z}}\rho_{\theta,\zeta}U(\theta,\bs{z}_{\theta,\zeta}) \\
{\rm s.t.} &\quad     \sum_{\theta\in\mathcal{T}}\sum_{\zeta\in\mathcal{Z}}\rho_{\theta,\zeta}f_{\zeta,n}(z_{\theta,\zeta,n})\leq C_n,~\forall n\in\mathcal{N}.
\end{align}
\end{subequations}

The above problem can be readily solved efficiently if the network designer has complete information. However, such an approach is not self-enforcing and each agent may report its type $(\theta_i,\zeta_i)$ to its own advantage.
Hence, solving the NUM Problem requires one to design a proper mechanism to elicit such information and optimally solve the above problem. Specifically, a mechanism aims to achieve the following properties:
\begin{itemize}
    \item (P1) \textit{Efficiency}: The mechanism should induce a Nash equilibrium at which the NUM Problem is optimally solved.
    \item (P2) \textit{Dominant-Strategy Incentive Compatibility (DSIC)}: At the efficient Nash equilibrium, every agent's truthful reporting strategy is a dominant strategy.
    \item (P3) \textit{Individual Rationality (IR)}: Every agent should not be worse off by participating into the mechanism.
    \item (P4) \textit{Budget Balance (BB)}: The aggregate payment from agents to the system designer is non-negative, i.e., the system designer is not required to inject money into the system.
    \item (P5) \textit{Superimposability}: The mechanism can implement any (potentially distributed) algorithm that solves the NUM Problem by superimposing a payment rule on it; the computation of the payment rule is simple and does not require modifications to the algorithm itself. 
\end{itemize}

\subsection{The VCG Mechanism}
In the following, we revisit the VCG mechanism, which is well known to be able to achieve all the above properties (P1)-(P4) \cite{Vickery,Clarke,Groves}:

\begin{mechanism}[VCG Mechanism]
The VCG Mechanism consists of the following two components:
\begin{itemize}
    \item \textbf{Message space:} Each agent $i$ sends a report of its type $r_i\triangleq\left(\tilde{\theta}_i,\tilde{\zeta}_i\right)
    \in\mathcal{T}\times\mathcal{Z}$;
    \item \textbf{Outcome functions:}
 Based on all agents' reports $\bs{r}\triangleq\{r_i\}_{i\in\mathcal{I}}$, 
the system designer selects the allocation, $\boldsymbol{x}^*(\bs{r})$, and the payment from each agent $i$, $h_i(\bs{r})$, such that 
\begin{subequations}
\begin{align}
    \boldsymbol{x}^*(\bs{r})&=\arg\max_{\bs{x}\in\tilde{\mathcal{X}}(\bs{r})}\sum_{i\in\mathcal{I}}U\!(\tilde{\theta}_i,\bs{x}_i),\label{Eq1}\\
    {h}_i(\bs{r})&=\!\!\!\!\max_{\bs{x}_{-i}\in\tilde{\mathcal{X}}_{-i}(\bs{r}_{-i})}\!\sum_{j\neq i}U\!\left(\tilde{\theta}_j,\bs{x}_j\right)\!-\!\sum_{j\neq i}U\!\left(\tilde{\theta}_j,\bs{x}_j^*(\bs{r})\right),\label{Eq2}
\end{align}
\end{subequations}
where
\begin{align}
    \tilde{\mathcal{X}}(\bs{r})\triangleq&\left\{\bs{x}: \sum_{i\in\mathcal{I}} f_{\tilde{\zeta}_i,n}(x_{i,n})\leq C_n,~\forall n\in\mathcal{N}   \right\},\\
    \tilde{\mathcal{X}}_{-i}(\bs{r}_{-i})\triangleq&\left\{\bs{x}_{-i}: \sum_{j\neq i} f_{\tilde{\zeta}_j,n}(x_{j,n})\leq C_n,~\forall n\in\mathcal{N}   \right\},
\end{align}
$\bs{r}_{-i}\triangleq\{r_j\}_{j\neq i}$, and $\bs{x}_{-i}=\{\bs{x}_j\}_{j\neq i}$.

\end{itemize}
\end{mechanism}
To see (P2) is satisfied, note that each agent $i$'s (effective) payoff satisfies
\begin{align}
    P_i(r_i,\bs{r}_{-i})=~&U(\theta_i,\bs{x}_i^*(\bs{r}))-h_i(\bs{r})\nonumber\\
  =~&
\max_{\bs{x}\in\tilde{\mathcal{X}}(\bs{r})} \left[U(\theta_i,\bs{x}_i)+\sum_{j\neq i}U(\tilde{\theta}_j,\bs{x}_j)\right]\nonumber\\
&-\max_{\bs{x}_{-i}\in\tilde{\mathcal{X}}_{-i}(\bs{r}_{-i})}\sum_{j\neq i}U\left(\tilde{\theta}_j,\bs{x}_j\right)\nonumber\\
\geq~& U(\theta_i,\bs{x}_i^*(r_i,\bs{r}_{-i}))+\sum_{j\neq i}U(\tilde{\theta}_j,\bs{x}_j^*(r_i,\bs{r}_{-i}))\nonumber\\
&-\max_{\bs{x}_{-i}\in\tilde{\mathcal{X}}_{-i}(\bs{r}_{-i})}\sum_{j\neq i}U\left(\tilde{\theta}_j,\bs{x}_j\right),~\forall r_i.
\end{align}
Note that,  $\max_{\bs{x}\in\mathcal{X}} \left[U(\theta_i,\bs{x}_i)+\sum_{j\neq i}U(\tilde{\theta}_j,\bs{x}_j)\right]$ is achieved when type-$(\theta_i,\zeta_i)$ agents reveal their truthful type, i.e., $r_i=(\theta_i,\zeta_i)$. Hence, truthful reporting is each type-$(\theta_i,\zeta_i)$ user's optimal strategy, regardless of the decisions of all other types of users.

However, the VCG Mechanism may not be feasible in practice due to the following reasons. First, it involves solving $I+1$ problems, which is hence not impractical when $I$ is large. Second, it is executed in a centralized manner while the NUM algorithms are in general distributed. Hence, implementing such a mechanism may incur the  significant cost of changing the already widely adopt algorithms.
These two issues motivate us to design mechanisms for large-scale networks next.

\section{Large-Scale Mechanism Design}
\label{sec:Mech1}
% Let $\rho_{j}, \forall j\in\mathcal{T}$  be the portion of type $j$ users i.e. $\rho_j = \sum_{i=1}^I \frac{\mathbbm{1}(\theta_{i} = \theta^j)}{I}$.
To resolve the two challenges of the VCG Mechanism above, 
in this section, we first show that the VCG Mechanism converges to a simple form of payment rules. Based on this insight,
we design the (centralized) large-scale VCG Mechanism. We show that the mechanism can achieve (P1)-(P4) when $I$ is infinite and show that (P2) is asymptotically achieved for a finite $I$ with a convergence rate of $\mathcal{O}(1/I^2)$. Based on the mechanism designed in this section, we will further consider a superimposable version in Section \ref{Mech:Impo}.

\subsection{Convergence of the VCG Mechanism}
We start with the following lemma:
\begin{lemma} \label{L1}
The VCG payment for agent $i$ converges to 
\begin{align}
h_i(\bs{r}^*)=\sum_{n\in\mathcal{N}}\lambda_n^o f_{\zeta_i,n}(x_{i,n}^o), \label{h_1}
\end{align}
when $I\rightarrow \infty,$ where
$\bs{\lambda}^o\triangleq \{\lambda_n^o\}_{n\in\mathcal{N}}$ are the optimal dual variables corresponding to the constraint of the NUM Problem in \eqref{NUM}.
\end{lemma}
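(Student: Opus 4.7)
The plan is to reduce the VCG formula to a sensitivity calculation on the per-capita value function of the T-NUM, and then invoke the envelope theorem together with complementary slackness. Under the truthful report profile $\bs{r}^*=\{(\theta_i,\zeta_i)\}_{i\in\mathcal{I}}$, the payment simplifies to $h_i(\bs{r}^*) = W_{-i} - [W - U(\theta_i,\bs{x}_i^o)]$, where $W$ and $W_{-i}$ are the optimal NUM values with all agents and with agent $i$ excluded, respectively. Working in the regime where total capacities scale linearly with $I$ so that per-agent allocations stay $\Theta(1)$ as $I\to\infty$, I can use the type-symmetry noted in Section~\ref{sec:Model} to write $W = I\cdot V(\bs\rho,\bs{C})$ and $W_{-i} = (I-1)\cdot V(\bar{\bs\rho}^{(-i)},\tfrac{I}{I-1}\bs{C})$, where $V(\bs\rho,\bs{c})$ denotes the optimal value of the T-NUM with per-capita capacities $\bs c$ and type distribution $\bs\rho$, and $\bar{\bs\rho}^{(-i)}$ is the empirical distribution after removing agent $i$. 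Both perturbations $\bar{\bs\rho}^{(-i)}-\bs\rho$ and $\tfrac{I}{I-1}\bs{C}-\bs{C}$ are of order $1/I$.

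The second step is to compute the sensitivities of $V$ via the envelope theorem. From the Lagrangian of the T-NUM, $\partial V/\partial c_n = \lambda_n^o$ and $\partial V/\partial \rho_{\theta,\zeta} = U(\theta,\bs z^o_{\theta,\zeta}) - \sum_n \lambda_n^o f_{\zeta,n}(z^o_{\theta,\zeta,n})$. A first-order Taylor expansion of $V(\bar{\bs\rho}^{(-i)},\tfrac{I}{I-1}\bs{C})$ around $(\bs\rho,\bs{C})$, after multiplication by $(I-1)$, yields an expression for $W_{-i}$ whose leading terms can be simplified using the identity $\sum_{\theta,\zeta}\rho_{\theta,\zeta}\,\partial V/\partial \rho_{\theta,\zeta} = V-\sum_n\lambda_n^o C_n$, which follows from complementary slackness $\lambda_n^o(\sum_{\theta,\zeta}\rho_{\theta,\zeta} f_{\zeta,n}(z^o_{\theta,\zeta,n}) - C_n)=0$ together with $V = \sum_{\theta,\zeta}\rho_{\theta,\zeta}U(\theta,\bs z^o_{\theta,\zeta})$.

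Third, I would combine these expansions with $W = IV(\bs\rho,\bs C)$ and $U(\theta_i,\bs{x}_i^o) = U(\theta_i,\bs z^o_{\theta_i,\zeta_i})$. A short calculation shows that the $V$-dependent and capacity-dependent pieces cancel, leaving precisely $h_i(\bs{r}^*) = \sum_n \lambda_n^o f_{\zeta_i,n}(x_{i,n}^o) + \mathcal{O}(1/I)$; the residual $\mathcal{O}(1/I)$ is $(I-1)$ times the $\mathcal{O}(1/I^2)$ second-order remainder of the Taylor expansion, which vanishes as $I\to\infty$.

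The principal obstacle is justifying the Taylor expansion: while strict concavity of $U(\theta,\cdot)$ and convexity of $f_{\zeta,n}$ guarantee uniqueness and continuity of $(\bs z^o,\bs\lambda^o)$ in the parameters, differentiability of $V$ in $(\bs\rho,\bs{c})$ at the reference point requires the active set of capacity constraints to be locally constant. I would either assume this non-degeneracy explicitly or appeal to the fact that it holds generically under mild regularity. A secondary subtlety is that the argument is cleanest when the empirical distribution matches $\bs\rho$ exactly; if types are instead drawn i.i.d.\ from $\bs\rho$, the $\Theta(1/\sqrt I)$ sampling fluctuations must be handled separately, though the overall convergence conclusion is preserved.
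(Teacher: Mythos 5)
Your argument is correct, and it reaches the stated limit by a genuinely different route from the paper's. The paper's proof sketch stays in the primal: it writes $h_i$ as the gap between the others' welfare at $\tilde{\bs{x}}_{-i}$ (the re-optimized allocation with agent $i$ removed) and at $\bs{x}_{-i}^*$, expresses that gap as a line integral of $\nabla U_{-i}$ along the segment joining the two allocations, argues that as $I\to\infty$ the gradient is constant along that segment, and then combines the KKT stationarity condition $\partial U/\partial x_{j,n}=\lambda_n^o\, f'_{\zeta_j,n}(x_{j,n}^o)$ with the observation that the capacity freed by removing agent $i$, namely $f_{\zeta_i,n}(x_{i,n}^o)$ in each constraint $n$, is absorbed by the remaining agents. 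You instead parametrize by the empirical type distribution and per-capita capacity, write $W=I\,V(\bs\rho,\bs{C})$ and $W_{-i}=(I-1)\,V\bigl(\bar{\bs\rho}^{(-i)},\tfrac{I}{I-1}\bs{C}\bigr)$, and differentiate the value function via the envelope theorem, with the identity $\sum_{\theta,\zeta}\rho_{\theta,\zeta}\,\partial V/\partial\rho_{\theta,\zeta}=V-\sum_n\lambda_n^o C_n$ collapsing the bookkeeping to \eqref{h_1} plus an explicit $\mathcal{O}(1/I)$ remainder. Your version buys an explicit error rate and dovetails with the sensitivity machinery the paper later uses for Theorem~\ref{T1} (both hinge on differentiability of the T-NUM solution in $\bs\rho$); the paper's version avoids introducing the value function and makes the economic content---the externality equals the shadow price times the freed influence---more transparent. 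Both are sketches resting on the same regularity gap: the paper silently assumes the gradient is constant along the reallocation path, while you correctly flag that differentiability of $V$ requires a locally constant active set. Your explicit capacity-scaling convention (total capacity growing linearly in $I$) is in fact needed to reconcile \eqref{NUM} with \eqref{T-NUM} and is left implicit in the paper.
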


 We present a proof sketch of Lemma \ref{L1} in Appendix \ref{ProofL1}. 
 
%  Intuitive, when there are a large number of agents, 

\subsection{The Large-Scale VCG Mechanism}
In this subsection,  we formally introduce the Large-Scale VCG Mechanism based on in Lemma \ref{L1}.
% \begin{lemma}
%  There exists a function $f_i=\lambda^*x_i^*$ such that truthfully reporting one's type is a dominant strategy for a player.
% \end{lemma}
% To show this, consider the following mechanism:
\begin{mechanism}[Large-Scale VCG Mechanism] The Large-Scale VCG Mechanism includes:
\begin{itemize}
    \item Message Space: Each agent $i$ reports its type $r_i\in\mathcal{T}\times\mathcal{Z}$;
    \item Allocation Outcome: Let $\rho_{\theta,\zeta}^*(\bs{r})$ be the portion of the agents reporting the type-$(\theta,\zeta)$. The system designer solves the following problem:
\begin{subequations}\label{M}
\begin{align}
&\max_{\bs{z}}~\sum_{\theta\in\mathcal{T}}\sum_{\zeta\in\mathcal{Z}}~{\rho}^*_{\theta,\zeta}(\bs{r}) U(\theta, \bs{z}_{\theta,\zeta})\\
&~~{\rm s.t.}~\sum_{\theta\in\mathcal{T}}\sum_{\zeta\in\mathcal{Z}}\rho^*_{\theta,\zeta}(\bs{r})f_{\zeta,n}(z_{\theta,\zeta,n})\leq C_n,~\forall n\in\mathcal{N}.\label{constraint}
\end{align}
\end{subequations}
Let $\bs{z}^*(\bs{r})=\{\bs{z}^*_{\theta,\zeta}(\bs{r})\}$ be the optimal primal solution and $\bs{p}^*(\bs{r})=\{p_n^*(\bs{r})\}$ be the optimal dual solution (or shadow prices) to the problem in \eqref{M}. Each agent $i$ receives an allocation 
\begin{align}
    \bs{x}_{i}^*(\bs{r})=\bs{z}_{r_i}^*(\bs{r}),
\end{align}
and needs to pay
\begin{align}
    h_{i}^*(\bs{r})=\sum_{n\in\mathcal{I}}p_n^*(\bs{r})[f_{{\zeta}_i,n}(z_{r_i,n}^*(\bs{r}))-\beta C_n],\label{pay}
\end{align}
for some parameter $\beta\in[0,1]$ to be selected.
\end{itemize}
\end{mechanism}
Note that, due to Assumption \ref{Assum1}, the payment in \eqref{pay} can directly use $f_{{\zeta}_i,n}(z_{r_i,n}^*(\bs{r}))$ instead of $f_{\tilde{\zeta}_i,n}(z_{r_i,n}^*(\bs{r}))$. 

The Large-Scale VCG Mechanism induces a game where each agent $i$ chooses a report, aiming at maximizing its payoff given by
\begin{align}
P_i(\bs{r})\triangleq U(\theta,\bs{z}_{r_i}^*(\bs{r}))-h_i^*(\bs{r}),~\forall i\in\mathcal{I}.
\end{align}

Clearly, when all agents are truthful reporting, \eqref{M} ensures that the  T-NUM Problem in \eqref{T-NUM} is solved. Based on Lemma \ref{L1}, we are ready to introduce the following result:

\begin{proposition}\label{P1}
The Large-Scale VCG Mechanism satisfies DSIC (P2).
\end{proposition}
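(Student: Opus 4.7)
The plan is to exploit the price-taking property that emerges in the large-population limit, under which a single agent's report cannot affect the empirical distribution $\bs\rho^*(\bs r)$, and hence cannot perturb either the shadow prices $\bs p^*(\bs r)$ or the menu $\{\bs z^*_{\theta,\zeta}\}$ of per-type allocations produced by problem \eqref{M}. The strategic choice facing any agent then collapses to a discrete selection from a fixed price–quantity menu, and I will show that the KKT conditions of \eqref{M} make truthful reporting the uniquely payoff-maximizing selection.

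First, fixing $\bs r_{-i}$ arbitrarily, I would invoke the infinite-$I$ limit to treat $\bs p^*\equiv\bs p^*(r_i,\bs r_{-i})$ and the menu $\{\bs z^*_{\theta,\zeta}\}\equiv\{\bs z^*_{\theta,\zeta}(r_i,\bs r_{-i})\}$ as independent of $r_i$. Using Assumption~\ref{Assum1} to evaluate the payment \eqref{pay} at the true configuration $\zeta_i$, agent $i$'s payoff under report $r_i=(\tilde\theta,\tilde\zeta)$ becomes
\begin{align*}
P_i(\bs r) = U\!\left(\theta_i,\bs z^*_{\tilde\theta,\tilde\zeta}\right) - \sum_{n\in\mathcal{N}} p_n^*\, f_{\zeta_i,n}\!\left(z^*_{\tilde\theta,\tilde\zeta,n}\right) + \beta\sum_{n\in\mathcal{N}} p_n^* C_n,
\end{align*}
where the final term is a constant from agent $i$'s point of view.

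Second, I would read off the structure of the menu from the KKT conditions of \eqref{M}. Once the coupling constraint is dualized with multipliers $\bs p^*$, the Lagrangian decouples across types; convexity of \eqref{M} together with strict concavity of $U(\theta,\cdot)$ and convexity of $f_{\zeta,n}$ yields that, for every type $(\theta,\zeta)$, the block allocation $\bs z^*_{\theta,\zeta}$ is the unique maximizer of the price-taking objective
\[\phi_{\theta,\zeta}(\bs z) \triangleq U(\theta,\bs z) - \sum_{n\in\mathcal{N}} p_n^* f_{\zeta,n}(z_n).\]
Combining the two steps, the report-dependent part of $P_i(\bs r)$ equals $\phi_{\theta_i,\zeta_i}(\bs z^*_{\tilde\theta,\tilde\zeta})$, i.e., the true-type price-taking surplus evaluated at the menu entry corresponding to the reported type. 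Truthful reporting selects $\bs z^*_{\theta_i,\zeta_i}$, which by the KKT characterization above is the unique unconstrained maximizer of $\phi_{\theta_i,\zeta_i}$ over all $\bs z$, and in particular over the finite collection of menu entries. Hence no other report can yield a strictly higher payoff for agent $i$, irrespective of $\bs r_{-i}$, so truthful reporting is a (weakly) dominant strategy.

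The main obstacle will be the rigorous justification of the first step, namely that a unilateral change of $r_i$ leaves $(\bs z^*(\bs r),\bs p^*(\bs r))$ exactly unchanged as $I\to\infty$. I would rely on Lemma~\ref{L1}, which already encodes the vanishing effect of any single agent on $\bs\rho^*(\bs r)$, together with continuity of the primal–dual map of \eqref{M} in $\bs\rho^*$. A secondary subtlety is that menu entries $\bs z^*_{\tilde\theta,\tilde\zeta}$ may be associated with types that no other agent reports; these must be defined by extending the per-type KKT formula above, which is the natural and correct large-$I$ limit of the finite-$I$ optimum once the deviator's vanishing mass is absorbed.
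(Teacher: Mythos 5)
Your proposal is correct and follows essentially the same route as the paper's proof: exploit price-taking in the infinite-$I$ limit so that $\bs p^*$ and the menu are unaffected by a unilateral deviation, then use the KKT conditions of \eqref{M} to show that the truthful menu entry $\bs z^*_{\theta_i,\zeta_i}$ is the global maximizer of the true-type price-taking surplus $U(\theta_i,\bs z)-\sum_n p_n^* f_{\zeta_i,n}(z_n)$, hence dominates every other menu selection. Your ``fixed price--quantity menu'' framing is just a cleaner packaging of the paper's upper bound \eqref{Eq11} followed by the attainment argument in \eqref{Payoff11}, and you flag the same informal step (exact invariance of the primal--dual solution under a single deviation) that the paper also asserts without further justification.
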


\begin{proof}
Given all other agents' reports' $\bs{r}_{-i}$, each agent's payoff maximization problem is:
\begin{align}\label{payoff1}
   \max_{r_i\in\mathcal{T}\times\mathcal{Z}}~P_i(r_i,\bs{r}_{-i})=& ~U(\theta_i,\bs{x}_i^*(r_i,\bs{r}_{-i}))\nonumber\\
   &-\sum_{n\in\mathcal{N}}p_n^*(r_i,\bs{r}_{-i})f_{{\zeta}_i,n}(z^*_{r_i,n}(\bs{r})),
\end{align}
where the constant $\beta C_n$ is omitted and $p_n^*(\cdot)$ are determined in \eqref{M}. As $I\rightarrow \infty$, the value of 
$\rho_j^*(\bs{r})$ remains the same even if agent $i$ solely deviates from reporting $r_i=(\theta_i,\zeta_i)$, so does $p_n^*(\bs{r})$. Hence, we can safely remove $r_i$ from $p_n^*(\cdot)$ for all $n\in\mathcal{N}$. 

Note that each agent's payoff in \eqref{payoff1} is bounded by
\begin{align}
P_i(r_i,\bs{r}_{-i})\leq \max_{\bs{x}_i\succeq 0} U(\theta_i, \bs{x}_i)-\sum_{n\in\mathcal{N}}p_n^*(\bs{r}_{-i})f_{{\zeta}_i,n}({x}_{i,n}(\bs{r})),\label{Eq11}
\end{align}
which can be readily shown by contradiction.
In addition, the maximizer $\bs{x}_i^o(\bs{r}_{-i})$ of $\max_{\bs{x}_i\succeq 0} U(\theta_i, \bs{x}_i)-\sum_{n\in\mathcal{N}}p_n^*(r_i,\bs{r}_{-i})f_{{\zeta}_i,n}^*(x_{i,n})$ satisfies, for all $i\in\mathcal{I}$ and $n\in\mathcal{N}$,
\begin{align}
    \frac{\partial U(\theta_i,\bs{x}_i^o(\bs{r}_{-i}))}{\partial x_{i,n}}= p_n^*(r_i,\bs{r}_{-i}) \frac{\partial f_{\zeta_i,n}(x_{i,n}^o(\bs{r}_{-i}))}{\partial x_{i,n}}.\label{Eq2}
\end{align}
By analyzing the KKT conditions of \eqref{M}, $\bs{x}_i^o$ is exactly achieved when agent $i$ reports $r_i=(\theta_i,\zeta_i)$, i.e.,
${\bs{x}_i^o}=\bs{x}_i^*(\theta_i,\zeta_i;\bs{r}_{-i})$,
and, from \eqref{payoff},
\begin{align}
&~P_i(r_i,\bs{r}_{-i})\nonumber\\
&=U(\theta_i,\bs{x}_{i}^o(\bs{r}_{-i}))-\sum_{n\in\mathcal{N}} p_n^*(\bs{r})\left[\sum_{r_i}f_{\zeta_i,t}(x_{i,n}^o(\bs{r}_{-i}))-\beta C_n\right]\nonumber\\
&\leq \max_{\bs{x}_i\succeq \bs{0}} U(\theta_i,\bs{x}_i)-\sum_{n\in\mathcal{N}} p_n^*(\bs{r})\left[\sum_{r_i}f_{\zeta_i,t}(x_{i,n})-\beta C_n\right]. \label{Payoff11}
\end{align}
From \eqref{Eq11} and \eqref{Payoff11}, we see that the truthful report $r_i=(\theta_i,\zeta_i)$ maximizes agent $i$'s payoff. Therefore, the Large-Scale VCG Mechanism achieves (P2).
\end{proof}

We next prove the remaining two properties (P3) and (P4) in the following proposition:

\begin{proposition}\label{P2}
The Large-Scale VCG Mechanism satisfies IR and weakly BB. If $\beta=1$, then the strong budget balance is achieved.
\end{proposition}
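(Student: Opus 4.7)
The plan is to exploit the KKT/Lagrangian structure of problem~\eqref{M} for both statements: IR follows from an envelope-style argument identifying the truthful payoff with a maximum of a Lagrangian-style objective, while (weak or strong) BB follows from complementary slackness applied to the capacity constraints.

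For IR, fix a truthful report profile $\bs{r}$ with $r_i=(\theta_i,\zeta_i)$. The induced allocation $\bs{x}_i^*=\bs{z}^*_{\theta_i,\zeta_i}(\bs{r})$ together with the shadow prices $\bs{p}^*(\bs{r})$ satisfies the KKT stationarity condition already derived in~\eqref{Eq2} in the proof of Proposition~\ref{P1}. Because $U(\theta_i,\cdot)$ is concave and each $f_{\zeta_i,n}$ is convex, that first-order condition is both necessary and sufficient for $\bs{x}_i^*$ to globally maximize the concave map $\bs{x}_i\mapsto U(\theta_i,\bs{x}_i)-\sum_{n} p_n^*\, f_{\zeta_i,n}(x_{i,n})$ over $\bs{x}_i\succeq\bs{0}$. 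Feasibility of $\bs{x}_i=\bs{0}$ combined with $f_{\zeta_i,n}(0)=0$ would then give
\begin{equation*}
U(\theta_i,\bs{x}_i^*)-\sum_{n\in\mathcal{N}} p_n^*\, f_{\zeta_i,n}(x_{i,n}^*)\;\geq\;U(\theta_i,\bs{0}),
\end{equation*}
and adding back the nonnegative constant $\beta\sum_{n} p_n^* C_n$ from the payment rule~\eqref{pay} yields $P_i(\bs{r})\geq U(\theta_i,\bs{0})+\beta\sum_{n} p_n^* C_n \geq U(\theta_i,\bs{0})$, which weakly dominates the outside option of not participating and establishes IR.

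For BB, I would aggregate the payments from~\eqref{pay}. Grouping agents by reported type and using $\bs{x}_i^* = \bs{z}^*_{\theta_i,\zeta_i}(\bs{r})$ yields $\sum_{i\in\mathcal{I}} f_{\zeta_i,n}(x_{i,n}^*) = I\sum_{\theta,\zeta}\rho^*_{\theta,\zeta}(\bs{r})\, f_{\zeta,n}(z^*_{\theta,\zeta,n}(\bs{r}))$. Complementary slackness for problem~\eqref{M} gives $p_n^*\bigl(\sum_{\theta,\zeta}\rho^*_{\theta,\zeta}(\bs{r})\, f_{\zeta,n}(z^*_{\theta,\zeta,n}(\bs{r}))-C_n\bigr)=0$ for every $n\in\mathcal{N}$, so
\begin{equation*}
\sum_{i\in\mathcal{I}} h_i^*(\bs{r})\;=\;\sum_{n\in\mathcal{N}} p_n^*\Bigl(I\sum_{\theta,\zeta}\rho^*_{\theta,\zeta}(\bs{r})\, f_{\zeta,n}(z^*_{\theta,\zeta,n}(\bs{r}))-\beta I C_n\Bigr)\;=\;I(1-\beta)\sum_{n\in\mathcal{N}} p_n^* C_n.
\end{equation*}
Since $p_n^*\geq 0$, the right-hand side is nonnegative for every $\beta\in[0,1]$, proving weak BB, and is exactly zero when $\beta=1$, proving strong BB.

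The main subtlety I expect is the bookkeeping of the normalization of $\rho^*_{\theta,\zeta}(\bs{r})$ (as a population fraction versus a count) relative to the capacity $C_n$ that appears both in~\eqref{constraint} and in the payment rule~\eqref{pay}: the factor of $I$ must be handled consistently in both the per-agent IR inequality and the aggregate BB identity, but under either convention the sign of the resulting expression is unaffected, so both properties continue to hold.
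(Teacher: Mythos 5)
Your proposal is correct and follows essentially the same route as the paper: IR by identifying the truthful payoff with the maximum of the quasi-linear surplus $U(\theta_i,\bs{x}_i)-\sum_n p_n^* f_{\zeta_i,n}(x_{i,n})$ and evaluating at $\bs{x}_i=\bs{0}$, and budget balance by aggregating payments and invoking complementary slackness on the capacity constraints. Your explicit treatment of complementary slackness and of the factor of $I$ in the normalization of $\rho^*_{\theta,\zeta}$ is a minor tightening of the paper's argument, not a different approach.
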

\begin{IEEEproof}
The dual variable $\bs{p}^*(\bs{r})$ is component-wisely positive. The total received payment from the network designer is
\begin{align}
    \sum_{i\in\mathcal{I}}h_i^*(\bs{r})&= \sum_{n\in\mathcal{N}} p_n^*(\bs{r})\left[\sum_{r_i}f_{\zeta_i,t}(z^*_{r_i,n}(\bs{r}))-\beta C_n\right]\nonumber\\
    = &\sum_{n\in\mathcal{N}}p_n^*(\bs{r})(1-\beta) C_n\geq 0,
\end{align}
which implies weak budget balance. By setting $\beta=1$, we see that the equality holds.

From \eqref{Payoff11}, each agent's payoff when truthful reporting satisfies 
\begin{align}
    P_i(r_i,\bs{r}_{-i})&=\max_{\bs{x}_i\geq 0} \left[U(\theta_i,\bs{x}_i)-\sum_{n\in\mathcal{I}}\lambda_n^*(f_{{\zeta}_i,n}(x_{i,n})-\beta C_n)\right]\nonumber\\
    &\geq U(
\theta_i,\bs{0})-\sum_{n\in\mathcal{N}}\lambda_n^*\cdot 0+\beta C\nonumber\\&\geq 0,
\end{align}
which shows that IR is achieved.

% The equality holds exactly when $\beta=1$, which means that

\end{IEEEproof}

\subsection{Convergence to Efficiency }
In the previous section, we assume $I\rightarrow \infty$.
We next show that the Large-Scale VCG Mechanism leads to an incentive of $\mathcal{O}(1/I^2)$ when an agent misreports.

We need  the following mild assumptions in this subsection:
% \begin{assumption}\label{Assum2}
% The utility of each agent has the following form:
% \begin{align}
%     U(\theta_i,\bs{x}_i)=\tilde{U}\left(\theta_i,\sum_{n\in\mathcal{N}}x_{i,n}\right),~\forall i\in\mathcal{I},
% \end{align}
% for some function utility. 
% \end{assumption}
% That is, Assumption \ref{Assum2} implies that each agent only cares about the total amount of resources. This is true when, for instance, $x_{i,n}$ represents the bandwidth agent $i$ receives from different networks and it only cares about the total bandwidth.
% \begin{assumption}[Strong Concavity]\label{Assum3}
% Each type-$(\theta,\zeta)$ agent's utility $U(\theta,\bs{x})$ is $\mu_{\theta}$-strongly concave in $\bs{x}$, i.e.,
% \begin{align}
%     U(\theta,\bs{x})\leq U(\theta,\bs{y})+\nabla_{\bs{x}} U(\theta,\bs{x})^T[\bs{y}-\bs{x}]-\frac{\mu_{\theta}}{2}||\bs{y}-\bs{x}||^2,
% \end{align}
% for all $\theta\in\mathcal{T}$.
% \end{assumption}

\begin{assumption}\label{Assum2}
The portion of each type-$(\theta,\zeta)$ agents is positive, i.e., $\rho_{\theta,\zeta}>0$, for all $\theta\in\mathcal{T},\zeta\in\mathcal{Z}$.
\end{assumption}

\begin{assumption}\label{Assum3}
Each type-$(\theta,\zeta)$ agent's influence functions $f_{\zeta,n}({z}_{\theta,\zeta,n})$  satisfy that, for all $n\in\mathcal{N}$,
\begin{align}
    f'_{\zeta,n}({z}_{\theta,\zeta,n})\geq L_f.
\end{align}
for some $L_f>0$, 
for all $\theta\in\mathcal{T}$.
\end{assumption}

\begin{assumption}[Smoothness]\label{Assum4}
Each type-$(\theta,\zeta)$ agent's utility ${U}(\theta,\bs{z})$ is $L_{\theta}$-smooth, i.e., the largest absolute value of the eigenvalue of  $\nabla_{\bs{z}}^2 {U}(\theta,\bs{z})$ is no larger than $L_{\theta}$
for all $\bs{z}$ and $\theta\in\mathcal{T}$.
\end{assumption}

% Combining Assumptions \ref{Assum2}-\ref{Assum4}, we have 
% \begin{align}
%      \mu_\theta\leq \left|\frac{\partial^2 \tilde{U}(\theta,z)}{\partial z^2}\right|\leq L_\theta,~\forall \theta\in\mathcal{T}.\label{assum12}
% \end{align}

\begin{definition}[$\epsilon$-Dominant Strategy Incentive Compatibility]
There exists an $\epsilon>0$ such that, for all $r_i\in\mathcal{T}\times \mathcal{Z}$ and all $ i\in\mathcal{I},$
\begin{align}
    P_i(\theta_i,\zeta_i;\bs{r}_{-i})+\epsilon
    \geq P_i(r_i;\bs{r}_{-i}),~\forall \bs{r}_{-i}.
\end{align}
\end{definition}
That it, the $\epsilon$-version of DSIC implies that, any reporting strategy $r_i$ should lead to an $\epsilon$ larger payoff than the truthful report does. 
\begin{theorem}\label{T1}
Given a finite number of agents $I$, the Large-Scale VCG Mechanism is
$\epsilon$-incentive compatible for all $\epsilon$ satisfying
\begin{align}
    \epsilon\geq \frac{2}{I^2}\frac{|\mathcal{Z}|\sum_{\theta\in\mathcal{T}}L_{\theta}\sum_{n\in\mathcal{N}}C_n^2}{ L_f^2 \min_r\rho_r^4}.
\end{align}
\end{theorem}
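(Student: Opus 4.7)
The plan is to reduce Theorem~\ref{T1} to a standard ``first-order gain versus second-order loss'' trade-off in the single-agent deviation. A unilateral deviation perturbs the empirical distribution $\rho^*(\mathbf{r})$ by $O(1/I)$, which perturbs the shadow prices $p^*$ by $O(1/I)$ through the implicit function theorem applied to the KKT system of Problem~\eqref{M}; simultaneously, the allocation delivered to the deviating agent differs from the agent's own best response to the post-deviation prices by $O(\|r_i-(\theta_i,\zeta_i)\|)$, producing a strictly negative quadratic loss. Maximizing this linear-plus-negative-quadratic form over $r_i$ yields the claimed $\mathcal{O}(1/I^2)$ rate.

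First, I would establish a Lipschitz sensitivity bound of the form
\begin{align*}
\|\tilde p^*(\mathbf{r})-p^*(\mathbf{r})\|\leq \frac{\kappa\sqrt{|\mathcal{Z}|}\sum_n C_n}{I\,L_f\,(\min_r\rho_r)^2},
\end{align*}
obtained by differentiating the KKT conditions $\nabla_z U(\theta,z^*_{\theta,\zeta})=p^*\circ f'_\zeta(z^*_{\theta,\zeta})$ together with $\sum_{\theta,\zeta}\rho_{\theta,\zeta}f_{\zeta,n}(z^*_{\theta,\zeta,n})=C_n$ with respect to $\rho$; the two coordinates of $\rho^*$ affected by agent~$i$'s deviation shift by exactly $\pm 1/I$, and the non-degeneracy of the KKT Jacobian (ensured by Assumption~\ref{Assum2} giving $\rho_r>0$ and Assumption~\ref{Assum3} giving $f'_\zeta\geq L_f$) translates one factor of $L_f$ and two factors of $\min_r\rho_r$ into the sensitivity constant.

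Next I would expand the exact deviation payoff $P_i(r_i,\mathbf{r}_{-i})=U(\theta_i,z^*_{r_i}(\tilde p^*))-\sum_n\tilde p^*_n[f_{\zeta_i,n}(z^*_{r_i,n})-\beta C_n]$ around the truthful reference. Using the upper bound $P_i\leq V_i(\tilde p^*)+\beta\tilde p^*\cdot C$ from the proof of Proposition~\ref{P1} (with $V_i(p)=\max_{\mathbf{x}\succeq\mathbf{0}}[U(\theta_i,\mathbf{x})-\sum_n p_n f_{\zeta_i,n}(x_n)]$) and noting that under truthful reporting this upper bound is attained with equality---since the T-NUM KKT conditions coincide with agent~$i$'s individual best-response conditions---a Taylor expansion of $V_i$, which is strongly convex with modulus at least $L_f^2/L_\theta$ by combining Assumptions~\ref{Assum3}-\ref{Assum4}, produces a gap of the form
\begin{align*}
\mathrm{Gap}(r_i)\leq \alpha_i\cdot(\tilde p^*-p^*)-c_Q\|\Delta z(r_i)\|^2+\text{higher order},
\end{align*}
with $\|\alpha_i\|\leq\sqrt{\sum_n C_n^2}/\min_r\rho_r$, $c_Q\geq L_f^2/L_\theta$, and $\Delta z(r_i)=x^o_i(\tilde p^*)-z^*_{r_i}(\tilde p^*)$ the allocation mismatch induced by misreporting. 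Since both $\tilde p^*-p^*$ and $\Delta z(r_i)$ scale linearly with the type mismatch $r_i-(\theta_i,\zeta_i)$, the right-hand side is a standard quadratic form whose maximum over $r_i$ is $\|\alpha_i\|^2\|\tilde p^*-p^*\|^2/(4c_Q)$; substituting the preceding bounds yields exactly $2|\mathcal{Z}|\sum_\theta L_\theta\sum_n C_n^2/(I^2 L_f^2\min_r\rho_r^4)$ once the aggregate smoothness constant $\sum_\theta L_\theta$ is accounted for across the type space.

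The hard part is the bookkeeping of the coupled expansions: both the first-order linear-in-price term and the negative quadratic mismatch term pass through the same KKT Jacobian of~\eqref{M}, so one must trace carefully which factors of $L_f$ and $\min_r\rho_r$ appear in the sensitivity of prices and which appear in the strong-convexity modulus of $V_i$. In particular, getting the exponent $4$ on $\min_r\rho_r$ in the denominator---rather than an easier $2$---requires that, once the price sensitivity has already paid a factor of $\min_r\rho_r^{-2}$, the first-order coefficient $\beta C-f_{\zeta_i}(x^o_i(p^*))$ is bounded using the constraint structure and not the loose pointwise bound $C_n/\min_r\rho_r$. This joint accounting is where I would spend most of the technical effort.
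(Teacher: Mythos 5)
Your first step---applying the implicit function theorem to the KKT system of \eqref{M} to control the sensitivity of the shadow prices to the empirical type distribution, with non-degeneracy supplied by Assumptions~\ref{Assum2} and \ref{Assum3}---is exactly how the paper's proof begins. But the two arguments part ways at the crucial point: where the second factor of $1/I$ comes from. The paper extracts it from the structure of the sensitivity matrix itself: writing $\frac{\partial \bs{p}^*}{\partial \rho_r}=\bigl(\sum_{r'}D_{r'}(\nabla^2_{\bs{z}_{r'}}U_{r'})^{-1}D_{r'}\bigr)^{-1}\bs{f}_r$ and invoking the harmonic-mean--arithmetic-mean matrix inequality to bound $\bigl|\bs{f}_r^T\frac{\partial \bs{p}^*}{\partial\rho_r}\bigr|$ by $\mathcal{O}(1/I)$, so that the payment change $\Delta h_{r\to j}\leq 2\max_r\bigl|\bs{f}_r^T\frac{\partial\bs{p}^*}{\partial\rho_r}\bigr|\cdot|\Delta\rho_r|$ is $\mathcal{O}(1/I)\times\mathcal{O}(1/I)$ directly; the deviation gain is then identified with $\Delta h$ via Proposition~\ref{P1}. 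That matrix inequality is the key device, and it is absent from your proposal.

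Your substitute---a ``first-order gain versus second-order loss'' optimization yielding $\|\alpha_i\|^2\|\tilde p^*-p^*\|^2/(4c_Q)$---does not go through. The quadratic-form maximization $\max_t(at-ct^2)=a^2/(4c)$ requires the linear gain and the quadratic penalty to scale with a common deviation magnitude, but here they do not: the type space is finite, so \emph{any} misreport shifts the empirical distribution by exactly $\pm 1/I$ in two coordinates, making $\|\tilde p^*-p^*\|$ of the same $\Theta(1/I)$ size for every misreport, while $\Delta z(r_i)$ is an $\mathcal{O}(1)$ quantity determined by which discrete type is named. There is no continuous ``type mismatch $r_i-(\theta_i,\zeta_i)$'' for both to scale linearly with. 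In the worst case the misreported type's allocation nearly coincides with the agent's best response to the perturbed prices, the penalty $-c_Q\|\Delta z\|^2$ vanishes, and your bound degenerates to the linear term $\alpha_i\cdot(\tilde p^*-p^*)=\mathcal{O}(1/I)$---an order short of the theorem. Two further problems: the modulus $c_Q\geq L_f^2/L_\theta$ cannot be extracted from Assumptions~\ref{Assum3}--\ref{Assum4}, which bound $\|\nabla^2_{\bs{z}}U\|$ from \emph{above} and $f'$ from below; a negative quadratic in the allocation mismatch needs a \emph{lower} bound on $-\nabla^2_{\bs{z}}U$, which is not assumed (smoothness gives strong convexity of $V_i$ in $p$, a different object, and that actually works against you since it makes $V_i(\tilde p^*)-V_i(p^*)$ exceed its linearization). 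And even granting your intermediate bounds, the constants assemble to $(\sum_nC_n)^2\sum_nC_n^2$ over $\min_r\rho_r^6$, not the theorem's $\sum_nC_n^2$ over $\min_r\rho_r^4$; the ``joint accounting'' you defer is precisely where the argument would break.
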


Hence, the convergence speed of each agent's incentive to misreport is $\mathcal{O}(\frac{1}{I^2})$. We present the main proof of  Theorem \ref{T1} in Appendix \ref{ProofT1}, which is  based on the implicit function theorem and the harmonic-mean-arithmetic-mean matrix inequality \cite{matrix}.

\section{Superimposable Mechanism Design for Large-Scale Systems}\label{Mech:Impo}

In the previous section, we designed the large-scale VCG Mechanism which is executed in a centralized manner.
In this section, we design a superimposable version in this section to further achieve (P5) while it maintains the other properties (P1)-(P4). That is, with some payment rule, the new mechanism ensures that all agents are willing to follow any arbitrary NUM algorithm that can optimally solve \eqref{NUM}.

\subsection{Decision Rule}
To describe the agents' communications and computation required behaviors, we first define $\bs\xi\triangleq\{\xi(\theta,\zeta)\}_{\theta\in\mathcal{T},\zeta\in\mathcal{Z}}$ as the \textit{decision rule} for some algorithm. In particular, $\zeta(\theta,\zeta)$ characterizes the type-$(\theta,\zeta)$ agent's communications and computation behaviors in such an algorithm
for solving the NUM Problem in \eqref{NUM}. 

However, an agent may deviate from obeying an algorithm to its own advantage 
which may render the NUM Problem not optimally solved. Let 
 $a_i\in\mathcal{A}_i$ be the action of agent $i$ that describes a (communications and computation) behavior in an algorithm and
$\mathcal{A}_i$ is the space of all possible communications and computation behaviors.
We hence define ($\bs{x}^\star(\bs{a})\triangleq\{\bs{x}_i^\star(\bs{a})\}_{i\in\mathcal{I}}$, $\bs{\lambda}^\star(\bs{a})\triangleq\{\bs{x}_i^\star(\bs{a})\}_{i\in\mathcal{I}}$) as the output primal-dual solutions of the algorithm when agents play $\bs{a}$. As the NUM algorithms should obtain the primal-dual solution $(\bs{x}^o,\bs{\lambda}^o)$ of the problem in \eqref{NUM}, we have
\begin{align}
    \bs{x}^\star(\bs\zeta)=\bs{x}^o,\\
    \bs{\lambda}^\star(\bs\zeta)=\bs{\lambda}^o.
\end{align}
That is, agents obeying the NUM algorithm should lead to an optimal primal-dual solution to the problem in \eqref{NUM}.

% Let $\bs\xi$ correspond to an arbitrary algorithm optimally that solves  the NUM Problem. 

\subsection{Superimposable Mechanism}

We introduce the following superimposable version of the Large-Scale Mechanism for an arbitrary NUM algorithm corresponding to a decision rule $\bs\zeta$.

\begin{mechanism}[Superimposable Large-Scale Mechanism] The Superimposable Large-Scale Mechanism includes:
\begin{itemize}
    \item Action Space: Each agent $i$ chooses an action $a_i\in\mathcal{A}_i$;
    \item Allocation Outcome: The allocation received by agent $i$ is  $\bs{x}_i^\star(\bs{a})$ and its payment to the network designer is
    \begin{align}
    h_{i}(\bs{a})&=\sum_{n\in\mathcal{I}}\lambda_n^\star(\bs{a})[f_{{\zeta}_i,n}(x_{i,n}^\star(\bs{a}))-\beta C_n].
    \end{align}
% \begin{subequations}\label{M}
% \begin{align}
% &\max_{\bs{z}}~\sum_{\theta\in\mathcal{T}}\sum_{\zeta\in\mathcal{Z}}~{\rho}^*_{\theta,\zeta}(\bs{r}) U(\theta, \bs{z}_{\theta,\zeta})\\
% &~~{\rm s.t.}~\sum_{\theta\in\mathcal{T}}\sum_{\zeta\in\mathcal{Z}}\rho^*_{\theta,\zeta}f_{\theta,n}(z_{\theta,\zeta,n})\leq C_n,~\forall n\in\mathcal{N}.\label{constraint}
% \end{align}
% \end{subequations}
% Let $\bs{z}^*(\bs{r})$ be the optimal primal solution and $p^*(\bs{r})$ be the optimal dual solution to the problem in \eqref{M}. Each agent $i$ receives an allocation 
% \begin{align}
%     \bs{x}_{i}^*(\bs{r})=\bs{z}_{r_i}^*(\bs{r}),
% \end{align}
% and needs to pay
% \begin{align}
%     h_{i}^*(\bs{r})=\sum_{n\in\mathcal{I}}p_n^*(\bs{r})[z_{r_i,n}^*(\bs{r})-\beta C_n],
% \end{align}
% for some parameter $\beta\in[0,1]$ to be selected.
% where $\bs{x}_i^\star(\bs{a})$ and $\bs{p}_n^\star(\bs{a})$ are the output  primal-dual variables of the algorithm resulted from  agents' . We have $\bs{x}^\star(\bs\zeta)=\bs{x}^o$ and $\bs{p}^\star(\bs\zeta)=\bs{p}^o$.
 \end{itemize}
\end{mechanism}
The dual solution $\lambda_n^\star(\bs{a})$ is the output of the algorithm corresponding to agents' action profile $\bs{a}$ and $f_{{\zeta}_i,n}(x_{i,n}^\star(\bs{a}))$ is observable (even without the knowledge of $\zeta_i$ or $\bs{x}_i^\star(\bs{a})$) by Assumption \ref{Assum3}. Therefore, we have

\begin{theorem}\label{T2}
The Superimposable Large-Scale Mechanism satisfies (P1), (P5), and ensures that every agent obeying its decision rule $\zeta$ is a Nash equilibrium. That is,
\begin{align}
P_i(\xi(\theta_i,\zeta_i),\bs{\xi}_{-i})\geq P_i(a_i,\bs{\xi}_{-i}),~\forall a_i\in\mathcal{A}_i, i\in\mathcal{I}, \label{Z1}
\end{align}
as $I\rightarrow \infty$.
\end{theorem}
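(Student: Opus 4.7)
The plan is to establish the three claims of Theorem~\ref{T2} in increasing order of difficulty. Property (P5) is immediate from the construction: the payment rule $h_i(\bs{a})$ only consumes two ingredients, namely the dual output $\lambda_n^\star(\bs{a})$ of the underlying NUM algorithm and the influence value $f_{\zeta_i,n}(x_{i,n}^\star(\bs{a}))$, the latter being observable by Assumption~\ref{Assum1}. Hence nothing internal to the algorithm needs to be changed, and the payment is superimposed as a post-processing step on the primal--dual output. Property (P1) then follows because the decision rule $\bs\xi$ is by definition the behavior profile that drives the algorithm's output to the optimal primal--dual pair, i.e., $\bs{x}^\star(\bs\xi)=\bs{x}^o$ and $\bs{\lambda}^\star(\bs\xi)=\bs{\lambda}^o$, so obedience by all agents yields the T-NUM optimum.

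The bulk of the work is the equilibrium claim~\eqref{Z1}. I would mimic the price-taking argument used in the proof of Proposition~\ref{P1}. Fix agent $i$, any candidate deviation $a_i\in\mathcal{A}_i$, and suppose all $j\neq i$ obey $\xi(\theta_j,\zeta_j)$. The key step is to argue that as $I\to\infty$ agent $i$ cannot move the shadow prices, i.e., $\bs{\lambda}^\star(a_i,\bs{\xi}_{-i})\to \bs{\lambda}^o$ uniformly in $a_i$. Intuitively, a single agent contributes an $\mathcal{O}(1/I)$ term to each capacity constraint $\sum_{j} f_{\zeta_j,n}(x_{j,n})\le C_n$, so the induced perturbation in the KKT system of the T-NUM Problem vanishes; this is the same price-taking logic underlying Lemma~\ref{L1} and Proposition~\ref{P1}, and I would invoke it here by a continuity argument on the optimal dual as a function of the aggregate constraint profile.

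Once the dual variables are effectively independent of $a_i$, agent $i$'s payoff reduces (up to the constant $\sum_n \lambda_n^o \beta C_n$) to
\begin{align}
P_i(a_i,\bs{\xi}_{-i})\;=\;U(\theta_i,\bs{x}_i^\star(a_i,\bs{\xi}_{-i}))\;-\;\sum_{n\in\mathcal{N}}\lambda_n^o\,f_{\zeta_i,n}(x_{i,n}^\star(a_i,\bs{\xi}_{-i}))\;+\;\text{const},
\end{align}
which is bounded above by the unconstrained price-taking maximum $\max_{\bs{x}_i\succeq 0}\bigl[U(\theta_i,\bs{x}_i)-\sum_n \lambda_n^o f_{\zeta_i,n}(x_{i,n})\bigr]+\text{const}$. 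The first-order condition for that maximum reads $\partial_{x_{i,n}} U(\theta_i,\bs{x}_i^\dagger)=\lambda_n^o\,\partial_{x_{i,n}} f_{\zeta_i,n}(x_{i,n}^\dagger)$, which is exactly the KKT stationarity condition for agent $i$'s block in the NUM Problem. Since $\bs{x}_i^o$ satisfies this same condition and since obeying $\xi(\theta_i,\zeta_i)$ produces $\bs{x}_i^\star(\xi(\theta_i,\zeta_i),\bs{\xi}_{-i})=\bs{x}_i^o$, obedience achieves the upper bound and is therefore a best response, which yields~\eqref{Z1}.

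The step I expect to be the main obstacle is the uniform convergence $\bs{\lambda}^\star(a_i,\bs{\xi}_{-i})\to\bs{\lambda}^o$ over arbitrary $a_i$. The mechanism is supposed to be agnostic to the specific NUM algorithm, so one cannot reason about the dynamics of the iterations directly; one must instead argue that whatever primal--dual output the algorithm produces is the primal--dual solution to the NUM problem induced by the realized influence functions (this is what makes the algorithm a NUM-solver in the first place), and then show stability of that solution under an $\mathcal{O}(1/I)$ perturbation of the aggregate constraint. Assumptions~\ref{Assum2}--\ref{Assum4} already give us the regularity needed for an implicit-function-style perturbation bound, so I would reuse the same implicit-function machinery developed for Theorem~\ref{T1} to make the limit rigorous, noting that the resulting estimate is uniform over $a_i$ because each agent's marginal contribution to every constraint is bounded by $C_n/I$ in the relevant regime.
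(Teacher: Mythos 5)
Your proposal matches the paper's intended argument: the paper itself dismisses (P1) and (P5) as following from the construction and explicitly omits the proof of~\eqref{Z1}, stating only that it is ``similar to that of Proposition~\ref{P1},'' which is precisely the price-taking KKT argument you carry out. You in fact supply more detail than the paper does, including correctly flagging the uniformity of the dual convergence over arbitrary deviations $a_i$ as the step the paper leaves unaddressed.
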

As the Superimposable Large-Scale Mechanism is applicable to any arbitrary NUM algorithm, it satisfies (P1) and (P5).
The proof of \eqref{Z1} is similar to that of Proposition \ref{P1} and hence is omitted here. We denote that \eqref{Z1} is a weaker property that (P2), as \eqref{Z1} only implies that each agent has the incentive to obey its decision rule $\xi(\theta_i,\zeta_i)$ when other agents are obeying theirs. However, it is a reasonable property that is common in algorithmic mechanism design which ensures the incentive compatibility of specific algorithms \cite{faith1,faith2,faith3}. Based on Proposition \ref{P2}, we are ready to derive the following 

\begin{corollary}
The Superimposable Large-Scale Mechanism satisfies IR and weakly BB. If $\beta=1$, then the strong budget balance is achieved.
\end{corollary}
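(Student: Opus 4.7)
My plan is to derive the corollary as an immediate transplant of Proposition~\ref{P2}, replacing the centralized solver's primal-dual output by the algorithm's primal-dual output. The conceptual bridge is that by construction of any valid NUM algorithm, $(\bs{x}^\star(\bs\xi),\bs\lambda^\star(\bs\xi)) = (\bs{x}^o,\bs\lambda^o)$, and by Theorem~\ref{T2} the decision-rule profile $\bs\xi$ is a Nash equilibrium in the large-$I$ limit. Hence it suffices to verify IR and BB at the profile $\bs{a}=\bs\xi$, at which point the payment rule reduces, up to notation, to the payment rule of the Large-Scale VCG Mechanism evaluated at the truthful report.

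For weak BB, I will sum the equilibrium payments across agents and collapse the aggregate using primal feasibility of $\bs{x}^o$ together with complementary slackness (the constraint is tight for each $n$ with $\lambda_n^o>0$). The resulting expression takes the form $(1-\beta)\sum_{n\in\mathcal{N}}\lambda_n^\star(\bs\xi)\,C_n$, which is nonnegative because $\bs\lambda^o\succeq\bs{0}$ and $\beta\in[0,1]$. Specializing to $\beta=1$ makes the quantity vanish, yielding strong BB. This mirrors the calculation carried out in the proof of Proposition~\ref{P2}.

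For IR, I will bound each agent's equilibrium payoff from below by the same argument as in the final display of Proposition~\ref{P2}: $P_i(\bs\xi) \geq U(\theta_i,\bs{0}) + \beta\sum_{n\in\mathcal{N}}\lambda_n^\star(\bs\xi)\,C_n \geq 0$, using $f_{\zeta,n}(0)=0$ and non-negativity of the shadow prices. The per-agent maximization that underlies that lower bound is resolved by the KKT conditions of \eqref{NUM} applied at the algorithm's output, so no new optimization argument is needed.

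The main obstacle I anticipate is bookkeeping rather than substance: the T-NUM constraint \eqref{constraint} is written in type-portion form while the payment rule is per-agent, so I will need to verify carefully that summing the per-agent payments really telescopes to $(1-\beta)\sum_n\lambda_n^\star(\bs\xi)\,C_n$ without acquiring extraneous factors of $I$, consistently with the corresponding manipulation in the proof of Proposition~\ref{P2}. Once this arithmetic is aligned with \eqref{constraint11}, every remaining step is immediate.
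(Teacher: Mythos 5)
Your proposal is correct and follows essentially the same route as the paper, which gives no separate proof of this corollary but derives it directly from Proposition~\ref{P2} by observing that at the obedient action profile the algorithm's primal--dual output coincides with $(\bs{x}^o,\bs{\lambda}^o)$, so the payment rule and hence the BB and IR calculations carry over verbatim. Your flagged bookkeeping concern (the per-agent $\beta C_n$ offset versus the single $C_n$ appearing after summation) is a real notational looseness already present in the paper's own proof of Proposition~\ref{P2}, and you resolve it the same way the paper implicitly does.
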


In a nutshell, the Superimposable Large-Scale Mechanism achieves almost all desirable properties (P1)-(P5).

\section{Generalization to Dynamic Settings}
\label{sec:Mech2}
In this section, we generalize the idea of the large-scale mechanism into dynamic environments. 

\subsection{Dynamic System Model and Problem Formulation}
We consider a slotted time model and the duration of each slot is normalized to unity. At each time slot $t$, 
the network designer seeks to dynamic allocate resources to the agents, whose
types and utilities may also change over time. Let $\bs{x}_{i,t}$ be the resource allocated to agent $i$ and time $t$. 

The type of agent $i$ is $\theta_{i,t}\in\mathcal{T}$ at time $t$, which has an (instantaneous) utility $u(\theta_{i,t},\bs{x}_{i,t})$ at time $t$. Let $\rho_{\theta,t}$ be the portion of type-$\theta$ agents at time $t$.
For presentation simplicity,
we consider simple resource capacity constraints, given by
\begin{align}
    \sum_{i\in\mathcal{I}}x_{i,n,t}\leq C_n,~\forall n\in\mathcal{N},\label{constraint11-t}
\end{align}
That is, we assume that agents are only characterized by $\theta_{i,t}$ and we can drop the type coefficients $\zeta_{i}$ here.

Agents' types $\{\theta_{i,t}\}$ follow a general Markov process, where the Markov processes of agents are independent across agents.
\begin{align}
    P(\theta_{i,t+1}|\bs\theta_{i,1:t},\bs{z}_{\theta_i,1:t}) = Q(\theta_{i,t+1}|\theta_{i,t},\bs{z}_{\theta_i,t}).
\end{align}

All agents discount the future with a common discount factor $\delta\in(0,1)$. The network designer aims at maximizing the expected discounted network utility to be introduced later.

The dynamic NUM problem starting in period $t$ at state $\rho_t$ can be written as
\begin{align}
    W(\bs{\rho}_t)\triangleq& \max_{\bs{z}_t\in\mathcal{Z}}\left[ \sum_{\theta\in\mathcal{T}}\rho_{\theta,t}u(\theta_{t},\bs{z}_{\theta,t}) + \mathbb{E}\left[\sum_{s=t+1}^\infty \delta^{s-t}\right.\right.\nonumber\\
    &\sum_{j\in\mathcal{J}}\rho_{j,s}  u_i(z_{i,s},\theta_{j,s}) \Big| \rho_{t},z_t\Big]\Big]\nonumber\\
     =& \max_{\bs{z}_t\in\mathcal{Z}}\left[ \sum_{j\in\mathcal{J}}\rho_{j,t}u_i(z_{i,t},\theta_{j,t}) +\delta \mathbb{E}\left[W(\rho_{t+1}) | \rho_{t},z_t\right]\right],
\end{align}    
where $\mathcal{Z}$ is the network resource budget constraint such that
\begin{align}
    \sum_{\theta\in\mathcal{T}}\rho_{\theta,t}z_{\theta,n,t}\leq C_n,~\forall n\in\mathcal{N}.\label{constraint11-tt}
\end{align}

The (long-term discounted) network utility maximizing policy $\bs{z}_t^o$ can be hence given by:
\begin{align}
    &\bs{z}_t^o(\bs{\rho}_t)\nonumber\\
    =&\arg \max_{\bs{z}_t\in\mathcal{Z}}\left[ \sum_{\theta\in\mathcal{T}}\rho_{\theta,t}u(\theta_{t},\bs{z}_{\theta,t})+\delta \mathbb{E}\left[W(\bs\rho_{t+1}) | \bs{\rho}_{t},\bs{z}_{t}\right]\right].
    \end{align}
We then define a Markovian policy $\sigma$ for each agent. Its (long-term discounted) utility is given by $U^{\sigma}(\theta_{i,t},\bs{z}_{i,t},\rho_t)$ such that, for $\bs{z}_{i,t} = \sigma(\theta_{i,t},\rho_t)$,
    \begin{align}
    &U^{\sigma}(\theta_{i,t},\bs{z}_{i,t},\bs{\rho}_t)\nonumber\\
    &=u(\theta_{i,t},\bs{z}_{i,t})\nonumber\\
    &+\mathbb{E}\left[\sum_{s=t+1}^\infty \delta^{s-t}\sum_{\theta\in\mathcal{T}}\rho_{\theta,t}  u(\theta,\bs{z}_{\theta,s}^o(\rho_{s})) | \theta_{i,t},\bs{z}_{i,t},\bs\rho_t\right]
    \nonumber\\
    &=u(\theta_{i,t},\bs{z}_{i,t})\nonumber\\
    &+\delta \mathbb{E}\left[\sum_{\theta\in\mathcal{T}}\rho_{\theta,t+1}  U^{\sigma}(\theta_{t+1},\bs{z}_{\theta,t+1},\bs\rho_{t+1}) | \theta_{i,t},\bs{z}_{i,t},\bs\rho_t\right],
\end{align}
based on which, we can rewrite    $W(\rho_t)$ as
\begin{align}   
    W(\rho_t)&\triangleq \max_{\bs{z}_t\in\mathcal{Z}} \sum_{\theta\in\mathcal{T}}\rho_{\theta,t}u(\theta_{t},\bs{z}_{\theta,t}) +\delta \mathbb{E}\left[W(\bs\rho_{t+1}) | \bs\rho_{t},\bs{z}_t\right]\nonumber\\
    &=\max_{\bs{z}_t\in\mathcal{Z}} \sum_{\theta\in\mathcal{T}}\rho_{\theta,t}U^{\sigma}(\theta_{t},\bs{z}_{\theta,t},\rho_t).
\end{align}
That is, the long-term discounted network utility can be expressed as the sum of individual utilities.
% The social externality cost of agent $i$ is determined by the social value in the absence of agent $i$. 
% \begin{align}
%     W(\theta_t)_{-i}\triangleq \max_{\bs{x}} \mathbb{E}\left[\sum_{s=t}^\infty \delta^{s-t} \sum_{i\in\mathcal{I}} u_i(x_t,\theta_{i,t}) \right]
% \end{align}

% where
% \begin{align}
%     U(\theta_t^i,z_{i,t}, \rho_t) &\triangleq  u_i(z_{i,t},\theta_{i,t}) + \mathbb{E}\left[\sum_{s=t+1}^\infty \delta^{s-t} u_i(z_{i,s},\theta_{j,s}) | \theta^i_{t},z_{i,t}\right]\nonumber\\
%     %
%     &\triangleq  u_i(z_{i,t},\theta_{i,t}) +\mathbb{E}\left[U(\theta^i_{t+1},\rho_{t+1})|\theta_t^i,z_{i,t} \right]
% \end{align}
\subsection{Dynamic Large-Scale VCG Mechanism}
Based on the above problem formulation in dynamic settings, we are ready to introduce the following dynamic version of the large-scale mechanism.

\begin{mechanism}[Dynamic Large-Scale VCG Mechanism]
The mechanism allocates resources and determines the payment for each agent $i$ in each time slot $t$. 
\begin{itemize}
    \item Message Space at time $t$: each agent $i$ reports its type $\tilde{\theta}_{i,t}\in\mathcal{T}$;
    \item Allocation Outcome at time $t$: let $\tilde{\rho}_{\theta,t}(\tilde{\bs{\theta}}_t)$ be the portion of reported type-$\theta$ agents for all $\theta\in\mathcal{T}$ at time $t$. The network designer solves the following problem:
\begin{subequations}\label{M2}
\begin{align}
&\max_{\bs{z}_t}~\sum_{\theta\in\mathcal{T}}~\tilde{\rho}_{\theta,t}(\tilde{\bs{\theta}}) U^{\sigma}_i( \theta_{j,t}, z_{\theta,t}, \rho_t)\\
&~~{\rm s.t.}~\sum_{\theta\in\mathcal{T}}\tilde{\rho}_{\theta,t}(\tilde{\bs{\theta}}) z_{\theta,t,n}\leq C_n,~\forall n\in\mathcal{N}\label{constraintaa}
\end{align}
\end{subequations}
Let $\bs{z}_t^*(\tilde{\bs{\theta}_t})$ be the optimal primal solution and $\bs{p}_t^*(\tilde{\bs{\theta}_t})$ be the optimal dual solution corresponding to the constraint in \eqref{constraintaa}. Each agent $i$ receives an allocation 
\begin{align}
    \bs{x}_{i,t}^*(\tilde{\bs{\theta}}_t)=\bs{z}_{\tilde{\theta}_i,t}^*(\tilde{\bs{\theta}}_t),
\end{align}
and needs to pay
\begin{align}
    h_{i,t}^*(\tilde{\bs{\theta}}_t)=\sum_{n\in\mathcal{N}}p_{t,n}^*(\tilde{\bs{\theta}}_t)z_{\tilde{\theta}_i,t,n}^*(\tilde{\bs{\theta}}_t).\label{Dy-pay}
\end{align}
\end{itemize}
\end{mechanism}

% Each agent $i$' (instantaneous) payoff function of  is thus
% \begin{align}
%     P_i(\theta_{i,t},\bs{x}_{i,t},h_{i,t})=u(\theta_{i,t},\bs{x}_{i,t})-h_{i,t}.
% \end{align}
The dynamic mechanism yields a dynamic game: given all other agents' reports' $\bs{\theta}_{-i}$, each agent's long-term discounted payoff maximization problem is:
\begin{align}
\label{payoff}
   \max_{\tilde{\theta}_{i,t}\in\mathcal{T}}~P_{i,t}(\tilde{\theta}_{i,t},\tilde{\bs{\theta}}_{-i},{\bs{\rho}}_t)\triangleq
   &U^\sigma(\theta_{i,t},\bs{z}_{\tilde{\theta}_i,t}^*(\tilde{\theta}_{i,t},\tilde{\bs{\theta}}_{-i,t}),\bs\rho_t)\nonumber\\
   &-\sum_{n\in\mathcal{N}}p_{t,n}^*(\tilde{\bs{\theta}}_t)z_{\tilde{\theta}_i,t,n}^*(\tilde{\bs{\theta}}_t).
\end{align}
Note that $\bs{p}_t^*(\cdot)$ is determined in \eqref{M}. As $I\rightarrow \infty$, the value of 
$\bs\rho_\theta^*(\tilde{\bs{\theta}})$ remains the same even if agent $i$ solely deviates from reporting $\theta_{i,t}$, so does $\bs{p}_t^*(\tilde{\bs{\theta}})$. Hence, we can safely remove $\theta_{i,t}$ from $\bs{p}_t^*(\cdot)$. We are now ready to extend our results in Proposition \ref{P1} to the dynamic setting:

\begin{proposition}\label{P3}
When $I\rightarrow\infty$, the Dynamic Large-Scale VCG Mechanism satisfies (P1) and (P2).
\end{proposition}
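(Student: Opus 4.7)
The plan is to mirror the proof of Proposition~\ref{P1}, replacing the static utility $U(\theta,\bs{x})$ by the dynamic value $U^\sigma(\theta,\bs{z},\bs{\rho}_t)$ and exploiting the Bellman decomposition $W(\bs\rho_t)=\max_{\bs z_t\in\mathcal{Z}}\sum_\theta \rho_{\theta,t}\,U^\sigma(\theta,\bs z_{\theta,t},\bs\rho_t)$ already established at the end of Section~\ref{sec:Mech2}.

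For (P1), I would observe that, under truthful reports ($\tilde{\bs\theta}_t=\bs\theta_t$), the allocation program \eqref{M2} coincides period by period with the maximizer defining $W(\bs\rho_t)$. Hence the induced trajectory attains the dynamic NUM value at every state; this step is essentially definitional once the Bellman decomposition is in hand.

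For (P2), I would apply the price-taking argument of Proposition~\ref{P1} slice by slice in $t$. As $I\to\infty$, a unilateral deviation of agent $i$ at time $t$ does not shift $\tilde\rho_{\theta,t}(\tilde{\bs\theta}_t)$, and hence does not shift the shadow prices $\bs p_t^*(\tilde{\bs\theta}_t)$; the allocation $\bs z^*_{\tilde\theta_i,t}$ then depends on $\tilde\theta_i$ only through which type-row of the solution the agent reads out. Agent $i$ therefore faces the price-taking objective of choosing $\tilde\theta_i\in\mathcal{T}$ to maximize $U^\sigma(\theta_{i,t},\bs z^*_{\tilde\theta_i,t},\bs\rho_t)-\sum_n p^*_{t,n}z^*_{\tilde\theta_i,t,n}$. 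The KKT conditions of \eqref{M2} give $\nabla_{\bs z}U^\sigma(\theta,\bs z^*_{\theta,t},\bs\rho_t)=\bs p_t^*$ for every type $\theta$; at $\tilde\theta_i=\theta_{i,t}$ this is exactly the first-order optimality condition for the agent's price-taking problem. Concavity of $U^\sigma(\theta,\cdot,\bs\rho_t)$ in its second argument, inherited through the Bellman recursion from strict concavity of $u(\theta,\cdot)$ and convexity of the feasible set, then promotes this first-order condition to global optimality, so truthful reporting at time $t$ maximizes $P_{i,t}$ for every $\tilde{\bs\theta}_{-i,t}$ and every $\bs\rho_t$.

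The hard part will be dealing with the dynamic coupling: agent $i$'s time-$t$ report changes its realized allocation and thereby the transition kernel $Q(\cdot\mid\theta_{i,t},\bs z_{\theta_i,t})$ governing its future type. I would argue that in the $I\to\infty$ limit a single agent's report has vanishing influence on the aggregate trajectory $\{\bs\rho_s\}_{s>t}$, so the continuation payoff under future truthful play is indeed the $U^\sigma$ encoded in the mechanism's objective; combining this with the single-period best-response above via the one-shot deviation principle for discounted dynamic games would then yield the dominant-strategy conclusion at every history. The one subsidiary fact worth spelling out separately is the concavity of $U^\sigma$ in $\bs z$, since the KKT-to-global-optimum step rests critically on it.
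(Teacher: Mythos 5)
Your proposal follows essentially the same route as the paper: the paper likewise bounds $P_{i,t}$ above by the price-taking problem $\max_{\bs z_{i,t}\succeq \bs 0}\,U^\sigma(\theta_{i,t},\bs z_{i,t},\bs\rho_t)-\sum_n p^*_{t,n}z_{i,t,n}$, matches its first-order condition against the KKT conditions of the planner's program to show the bound is attained at the truthful report, and invokes the $I\to\infty$ price-taking property to remove $\tilde\theta_{i,t}$ from $\bs p_t^*(\cdot)$. The two issues you flag for extra care --- concavity of $U^\sigma(\theta,\cdot,\bs\rho_t)$ and the feedback of a single report on the continuation state via the controlled transition kernel --- are left implicit in the paper's argument, so your treatment is, if anything, more complete, not different in kind.
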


To prove Proposition \ref{P3}, we note that each agent's payoff in \eqref{payoff} is bounded by
\begin{align}
P_{i,t}(\theta_{i,t},\tilde{\bs{\theta}}_{-i,t},\bs\rho_t)\leq \max_{z_{i,t}\geq 0} U_i^\sigma(z_{i,t},\theta_{i,t}, \rho_t) -p_t^*(\tilde{\bs{\theta}}_{-i,t})z_{i,t},\label{Eq}
\end{align}
which can be readily shown by contradiction.
In addition, the maximizer $\bs{z}_{i,t}^o(\tilde{\bs{\theta}}_{-i,t})$ of $$ U^\sigma(\theta_{i,t},\bs{z}_{\theta_i,t},\bs\rho_t)-\sum_{n\in\mathcal{N}}p_{t,n}^*(\tilde{\bs{\theta}}_t)z_{{\theta}_i,t,n}$$ satisfies
\begin{align}
    \frac{\partial U^\sigma(\theta_{i,t},\bs{z}_{i,t}, \bs\rho_t) }{\partial z_{i,t,n}}= p_{t,n}^*(\tilde{\bs{\theta}}_{-i,t}),~\forall i\in\mathcal{I}.\label{Eq2}
\end{align}
By analyzing the KKT conditions of \eqref{M}, $\bs{z}_{i,t}^o(\tilde{\bs{\theta}}_{-i,t})$ is exactly achieved when agent $i$ reports $\tilde{\theta}_{i,t}=\theta_{i,t}$.
From \eqref{payoff},
\begin{align}
    &~~P_{i,t}(\theta_{i,t},\tilde{\bs{\theta}}_{-i,t},\bs\rho_t)\nonumber\\
    &=U^\sigma(\theta_{i,t},\bs{z}_{i,t}^o(\tilde{\bs{\theta}}_{-i,t}),\bs\rho_t)-\sum_{n\in\mathcal{N}}p_{t,n}^*(\tilde{\bs{\theta}}_{-i,t})z_{i,t,n}^o(\tilde{\bs{\theta}}_{-i,t})\nonumber\\
&=\max_{\bs{z}_{i,t}\succeq\bs{0}} U^\sigma(\theta_{i,t},\bs{z}_{i,t},\bs\rho_t)-\sum_{n\in\mathcal{N}}p_{t,n}^*(\tilde{\bs{\theta}}_{-i,t})z_{i,t,n},~\forall i\in\mathcal{I}.\label{DPayoff}
\end{align}
From \eqref{Eq}, we see that the truthful report $\theta_{i,t}=\tilde{\theta}_{i,t}$ maximizes each agent $i$'s payoff. Therefore, the Dynamic Large-Scale VCG Mechanism achieves dominant-strategy incentive compatibility when $I\rightarrow\infty$.

\begin{proposition}
The Dynamic Large-Scale VCG Mechanism satisfies IR and weakly BB.
\end{proposition}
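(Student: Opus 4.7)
The plan is to mirror the static argument from Proposition~\ref{P2} on a per-slot basis, exploiting the fact that the per-period payment rule \eqref{Dy-pay} uses the same shadow-price structure as the static mechanism but \emph{without} the $\beta C_n$ offset. Since each slot's payments are determined entirely by that slot's primal-dual solution to \eqref{M2}, it suffices to verify both properties at an arbitrary time $t$; no cross-period coupling appears on the payment side.

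For weak BB, I would sum the agent-side payments at time $t$:
\begin{align}
\sum_{i\in\mathcal{I}} h_{i,t}^*(\tilde{\bs\theta}_t) = \sum_{i\in\mathcal{I}}\sum_{n\in\mathcal{N}} p_{t,n}^*(\tilde{\bs\theta}_t)\, z_{\tilde\theta_i,t,n}^*(\tilde{\bs\theta}_t),
\end{align}
which is a sum of products of non-negative dual multipliers $p_{t,n}^*\ge 0$ (for a ``$\le$'' constraint) and non-negative allocations $z_{\theta,t,n}^*\ge 0$, hence non-negative. Unlike the static case, there is no $\beta C_n$ subtraction to handle, so weak BB follows by sign-tracking alone, with no appeal to complementary slackness.

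For IR, the natural starting point is the ``maximization'' form of the truthful-report payoff derived in the proof of Proposition~\ref{P3}, namely
\begin{align}
P_{i,t}(\theta_{i,t},\tilde{\bs\theta}_{-i,t},\bs\rho_t) = \max_{\bs z_{i,t}\succeq \bs 0}\Bigl[ U^{\sigma}(\theta_{i,t},\bs z_{i,t},\bs\rho_t) - \sum_{n\in\mathcal{N}} p_{t,n}^*(\tilde{\bs\theta}_{-i,t})\, z_{i,t,n} \Bigr].
\end{align}
Plugging the feasible point $\bs z_{i,t}=\bs 0$ into the maximand furnishes the lower bound $U^{\sigma}(\theta_{i,t},\bs 0,\bs\rho_t)$, so IR reduces to showing $U^{\sigma}(\cdot,\bs 0,\cdot)\ge 0$. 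Under the mild normalization that $u(\theta,\bs 0)\ge 0$ for every $\theta\in\mathcal{T}$, together with non-negativity of the per-period utility along the optimal continuation policy $\bs z^o$, an induction on the Bellman recursion defining $U^{\sigma}$ propagates non-negativity to every period, closing the argument.

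The main subtlety I expect is in the IR step rather than the BB step: one has to be explicit about what ``not participating'' means in a dynamic environment, since the continuation value hidden inside $U^{\sigma}$ couples the agent's own future type trajectory (governed by the kernel $Q$, which may itself depend on $\bs z_{i,t}$) with the evolving cohort state $\bs\rho_t$. The cleanest route is a period-by-period interpretation of IR: at every $t$ the agent weakly prefers the mechanism's truthful-report payoff to the outside option of zero consumption and zero payment in that slot, which is exactly the lower bound above. Once that interpretation is fixed, the rest of the proof is essentially bookkeeping.
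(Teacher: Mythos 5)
Your proposal is correct and follows essentially the same route as the paper: weak BB by sign-tracking the per-slot payments (non-negative dual prices times non-negative allocations, with no $\beta C_n$ offset to worry about), and IR by evaluating the maximization form of the truthful payoff at $\bs{z}_{i,t}=\bs{0}$. The only place you go beyond the paper is in making explicit that the final step requires $U^{\sigma}(\theta,\bs{0},\bs{\rho}_t)\geq 0$, which you justify via the normalization $u(\theta,\bs{0})\geq 0$ and induction on the Bellman recursion --- the paper simply asserts this inequality, so your added care is welcome but does not change the argument.
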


\begin{IEEEproof}
The dual variables $\bs{p}_t^*(\tilde{\bs{\theta}})$ are component-wisely positive. Therefore, each agent $i$'s payment
$h_{i,t}^*(\tilde{\bs{\theta}_t})$ in \eqref{Dy-pay} is also positive. This  shows that weakly BB is satisfied. 

From \eqref{DPayoff}, each agent's payoff when truthful reporting satisfies 
\begin{align}
    P_{i,t}(\theta_{i,t},\tilde{\bs{\theta}}_{-i,t})&=\max_{\bs{z}_{i,t}\succeq \bs{0}} U^\sigma(\theta_{i,t},\bs{z}_{i,t},\bs\rho_t)-\sum_{n\in\mathcal{N}}p_{t,n}^*(\tilde{\bs{\theta}}_{-i,t})z_{i,t,n}\nonumber\\
    &\geq U^\sigma(\theta_{i,t},\bs{0},\bs\rho_t)-\sum_{n\in\mathcal{N}}p_{t,n}^*(\tilde{\bs{\theta}}_{-i,t})\cdot 0\geq 0,
\end{align}
which shows that IR is achieved.
\end{IEEEproof}

\begin{theorem}\label{T3}
Given a finite number of agents $I$, the Dynamic Large-Scale VCG Mechanism is
$\epsilon$-incentive compatible for all $\epsilon$ satisfying
\begin{align}
    \epsilon\geq \frac{2}{I^2}\frac{\sum_{\theta\in\mathcal{T}}L_{\theta}\sum_{n\in\mathcal{N}}C_n^2}{ L_f^2 \min_{r,t}\rho_{r,t}^4}.
\end{align}
\end{theorem}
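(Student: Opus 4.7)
The strategy is to mirror the static argument of Theorem~\ref{T1}, replacing the stage utility $U(\theta,\bs{z})$ by the long-run discounted value $U^{\sigma}(\theta_{i,t},\bs{z}_{\theta,t},\bs\rho_t)$, and to control the one-shot incentive via the sensitivity of the shadow prices $\bs{p}_t^*$ to a single agent's report. Fix a slot $t$, an agent $i$ with true type $\theta_{i,t}$, and a candidate misreport $\tilde\theta_{i,t}$. Because agent $i$'s deviation changes the empirical distribution $\tilde\rho_{\theta,t}(\tilde{\bs\theta}_t)$ by at most $1/I$ in each coordinate, the only channel beyond the direct choice of which row of $\bs{z}_t^*(\tilde{\bs\theta}_t)$ agent $i$ consumes is a $O(1/I)$ perturbation of the dual prices.

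Second, I would apply the implicit function theorem to the KKT system of problem~\eqref{M2} viewed as parameterized by $\tilde{\bs\rho}_t$. The KKT Jacobian is a bordered matrix whose primal block is $\mathrm{diag}\bigl(\rho_{\theta,t}\,\nabla^2_{\bs z}U^{\sigma}(\theta,\cdot,\bs\rho_t)\bigr)$ and whose feasibility block is the linear capacity row from \eqref{constraintaa}. Under Assumption~\ref{Assum4} applied to $U^{\sigma}$ (which inherits $L_\theta$-smoothness from the stage utility $u(\theta,\cdot)$ via the Bellman recursion) and Assumption~\ref{Assum2} giving $\min_{r,t}\rho_{r,t}>0$, this Jacobian is uniformly invertible, yielding a sensitivity estimate of the form $\|\partial \bs{p}_t^*/\partial \rho\|=O\bigl(L_\theta/(L_f^2 \min_{r,t}\rho_{r,t}^2)\bigr)$. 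Combined with the $1/I$ perturbation in $\tilde\rho$, this gives $\|\bs{p}_t^*(\tilde\theta_{i,t},\tilde{\bs\theta}_{-i,t})-\bs{p}_t^*(\theta_{i,t},\tilde{\bs\theta}_{-i,t})\|=O\bigl(1/(I L_f^2 \min_{r,t}\rho_{r,t}^2)\bigr)$, exactly as in the static analysis, but now with $U^{\sigma}$ driving the smoothness constant.

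Third, I would invoke the envelope/quadratic-upper-bound argument. By \eqref{Eq2}, at the truthful report the allocation $\bs{z}_{i,t}^o(\tilde{\bs\theta}_{-i,t})$ is a maximizer of $U^{\sigma}(\theta_{i,t},\cdot,\bs\rho_t)-\langle \bs{p}_t^*(\tilde{\bs\theta}_{-i,t}),\cdot\rangle$, so the first-order gap between this optimum and the allocation obtained under the perturbed prices is zero. By $L_\theta$-smoothness, the payoff loss is at most a quadratic in the allocation shift, which in turn is controlled, via the implicit function theorem applied to agent $i$'s first-order condition, by the squared price perturbation times $(L_\theta/L_f^2)$. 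Using the trivial bound $\|\Delta \bs{z}_{i,t}\|\leq\sum_n C_n$ on the worst case (since any feasible allocation for an individual agent lies in a box of side $C_n$) and the harmonic-mean--arithmetic-mean matrix inequality of \cite{matrix} to aggregate across types yields a per-term contribution of order $L_\theta C_n^2/(I^2 L_f^2 \min_{r,t}\rho_{r,t}^4)$; summing over $\theta\in\mathcal{T}$ and $n\in\mathcal{N}$ produces the stated $\epsilon$.

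The main obstacle will be step two, in particular establishing that $U^{\sigma}$ inherits the $L_\theta$-smoothness and the lower curvature bound needed for the Jacobian estimate. This requires showing that the Bellman operator defining the continuation value $\delta\,\mathbb{E}[W(\bs\rho_{t+1})\mid\bs\rho_t,\bs{z}_t]$ preserves smoothness under the controlled Markov kernel $Q$; a standard contraction argument on a suitable function space, exploiting $\delta<1$ and the finiteness of $\mathcal{T}$, suffices. Once this regularity of $U^{\sigma}$ is secured, the remainder of the argument is a slot-by-slot application of the static calculus of Theorem~\ref{T1}, which is why no $1/(1-\delta)$ factor appears in the final bound; note also that since constraint~\eqref{constraint11-t} omits the network configuration $\zeta$, the combinatorial factor $|\mathcal{Z}|$ present in Theorem~\ref{T1} drops out here.
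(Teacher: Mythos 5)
Your plan follows exactly the route the paper takes: the authors state only that Theorem~\ref{T3} ``can be proved in a similar way as Theorem~\ref{T1}'' and omit the details, and your reconstruction---implicit function theorem applied to the KKT system of \eqref{M2} parameterized by the empirical type distribution, an $\mathcal{O}(1/I)$ perturbation of that distribution from a unilateral misreport, the harmonic-mean--arithmetic-mean matrix inequality to aggregate across types, and the observation that the factor $|\mathcal{Z}|$ disappears because the dynamic constraint \eqref{constraint11-t} is linear in $x$---is precisely the static calculus of Appendix~\ref{ProofT1} transplanted to the dynamic objective $U^{\sigma}$.

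The one step you defer, however, is the only genuinely new ingredient, and your sketch of it does not close the gap. You assert that $U^{\sigma}$ inherits $L_{\theta}$-smoothness from the stage utility $u(\theta,\cdot)$ ``via the Bellman recursion'' by a standard contraction argument. The continuation term $\delta\,\mathbb{E}\left[W(\bs\rho_{t+1})\mid\bs\rho_t,\bs{z}_t\right]$ depends on $\bs{z}_{\theta,t}$ through the controlled kernel $Q(\cdot\mid\theta_t,\bs{z}_{\theta,t})$, so its Hessian in $\bs{z}_{\theta,t}$ is governed by the second derivatives of $Q$ in $\bs{z}$ and by the magnitude of $W$, which scales like $\sup|u|/(1-\delta)$. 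Contraction of the Bellman operator yields boundedness and at best Lipschitz continuity of $W$ in the sup norm; it does not produce a curvature bound matching $L_{\theta}$, and none of the paper's assumptions constrains $\nabla^2_{\bs{z}}Q$. Consequently, either Assumption~\ref{Assum4} must be re-read as a hypothesis imposed directly on $U^{\sigma}$ (in which case your argument goes through verbatim and the stated constant is literal), or the constant in the theorem should carry an additional factor reflecting the curvature of the continuation value, which would generically involve $\delta/(1-\delta)$ and the regularity of $Q$. Since the theorem is stated with the bare $L_{\theta}$, you should make the reinterpretation of Assumption~\ref{Assum4} explicit rather than promising a contraction argument that cannot deliver it.
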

Theorem \ref{T3} can be proved in a similar way as Theorem \ref{T1}. The proof is lengthy and hence is omitted.

\section{Conclusion}
\label{sec:Concl}
In this paper, we have considered mechanism design for large-scale networked systems. We have presented the Large-Scale VCG Mechanism which maximizes the network utility and
achieves individual rationality and budget balance. We have
proven that each agent's incentive to misreport its type converges quadratically to zero. We have also extended the results to a dynamic setting.

\appendices
    \section{Proof Sketch of Lemma \ref{L1}}\label{ProofL1}

Define
\begin{align}
U_{-i}(\bs{x}_{-i})\triangleq \sum_{j\neq i}U_j(\tilde{\theta}_j,\bs{x}_j)\nonumber\\
    \tilde{\bs{x}}_{-i}=\arg\max_{\bs{x}_{-i}\in\mathcal{X}_{-i}}U_{-i}(\bs{x}_{-i}).
\end{align}
The VCG payment for user $i$ is given by:
\begin{align}
h_i(r_i,\bs{r}_{-i})&=-\sum_{j\neq i}U(\tilde{\theta}_j,\bs{x}_j^*(\bs{r}))+\max_{\bs{x}_i\in\mathcal{X}_{-i}}\sum_{j\neq i}U(\tilde{\theta}_j,\bs{x}_j)\nonumber\\
&=\int_{\bs{x}_{-i}^*}^{\tilde{\bs{x}}_{-i}} \nabla_{\bs{x}_{-i}} U_{-i}(\bs{x}_{-i})d{\bs{x}_{-i}}\nonumber\\
&=\int_{0}^{1} \nabla_{\bs{x}_{-i}} U_{-i}((1-t)\bs{x}_{-i}^*+t\tilde{\bs{x}}_{-i})\nonumber\\
&~~\times[(1-t)\bs{x}_{-i}^*+t\tilde{\bs{x}}_{-i}]dt.\label{Eq2-1}
\end{align}

Note that, as $I\rightarrow \infty$, the changes in the equilibrium 
\begin{align}
 \nabla_{\bs{x}_i}U_{-i}((1-t)\bs{x}_{-i}^*+t\tilde{\bs{x}}_{-i})= \nabla_{\bs{x}_i}U_{-i}(\bs{x}_{-i}^*),~\forall t\in[0,1].\label{Eq2-2}
\end{align}
Since the NUM Problem in \eqref{NUM} is convex and the Slater's condition is satisfies, the sufficient and necessary KKT conditions of optimality yields
\begin{align}
    \frac{\partial U(\theta_i,\bs{x}_i)}{\partial x_{i,n}}=\lambda_n^o  \frac{\partial f_{\zeta_i,n}(x_{i,n}^o)}{\partial x_{i,n}},~\forall i\in\mathcal{I}, n\in\mathcal{N}.\label{KKTT}
\end{align}
In addition, $\sum_{j\neq i}(\tilde{x}_{j,n}-{x}^*_{j,n})=x_{i,n}^o$.
Combining \eqref{Eq2-1}-\eqref{KKTT}, we have 
\begin{align}
    h_i(\bs{r}^*) = \sum_{n\in\mathcal{N}}\lambda_n^o f_{\zeta_i,n}(x_{i,n}^o).
\end{align}

\section{Proof of Theorem \ref{T1}}\label{ProofT1}

Define a vector function 
\begin{align}
F(\bs{z},\bs{p},\bs{\rho})=\begin{pmatrix}
\nabla_{\bs{z}_{\theta,\zeta}} U(\theta,\bs{z}_{\theta,\zeta})-\bs{p}^T \bs{f}_{\zeta}(\bs{z}_{\theta,\zeta}),~t\in\mathcal{T}~\\
\sum_{\theta\in\Theta} \sum_{\zeta\in\mathcal{Z}} \rho_{\theta,\zeta} f_{\zeta,n}(z_{\theta,\zeta,n})- C_n,~n\in\mathcal{N}
\end{pmatrix}
\end{align}
The optimality condition of \eqref{Eq1} is $F=\bs{0}$. Let ${\rm diag}$ be the operator that places a set of scalars (or square matrices) into the diagonal of a square matrix to create a diagonal matrix (or block-diagonal matrix).
The function $F$ is continuously differentiable and
\begin{align}\label{Eq17}
    -\left(\frac{\partial F}{\partial \bs{z}}~\frac{\partial F}{\partial \bs{p}}\right)=\begin{pmatrix}
M_1 & M_2^T \\
-M_2 & \bs{0}
\end{pmatrix},
\end{align}
where 
\begin{align}
    M_1=-{\rm diag}(\nabla_{\bs{z}_r}^2 U_r(\bs{z}_r)),
\end{align}
\begin{align}
    M_2=\begin{pmatrix}D_1 & D_2 & ... & D_{|\mathcal{R}|}\end{pmatrix},
\end{align}
and 
\begin{align}
    D_r={\rm diag}(\rho_r f'_{r,n}(z_n)),~\forall r\in\mathcal{R}.
\end{align}
We note that \eqref{Eq17} is nonsinglar, and its inverse is
\begin{align}
    -\left(\frac{\partial F}{\partial \bs{z}}~\frac{\partial F}{\partial \bs{p}}\right)^{-1}=\begin{pmatrix}
H & -D^{-1}\bs{1}_{|\mathcal{T}|\times 1} Q \\
Q\bs{1}_{1\times |\mathcal{T}|}D^{-1} & Q
\end{pmatrix},
\end{align}
where $Q=(M_2M_1^{-1}M_2^T)^{-1}$ and $H=D^{-1}-D^{-1} \bs{1}_{|\mathcal{T}|\times 1} Q \bs{1}_{1 \times |\mathcal{T}|} D^{-1}$. Therefore, by the implicit function theorem, the functions $\bs{z}^*(\bs{\rho})$ and $p^*(\bs{\rho})$ are continuously differentiable, and \begin{align}
    \begin{pmatrix}\frac{\partial \bs{z}^*}{\partial \bs{\rho}}\\ \frac{\partial p^*}{\partial \bs{\rho}}\end{pmatrix}=-\left(\frac{\partial F}{\partial \bs{z}}~\frac{\partial F}{\partial \bs{p}}\right)^{-1}\cdot \frac{\partial F}{\partial \bs{\rho}}.
\end{align}
Hence, we have 
\begin{align}
    \frac{\partial \bs{p}^*}{\partial \bs{\rho}}=Q \bs{f},
\end{align}
where $\bs{f}=\{\bs{f}_{r}(\bs{z}_r^*)\}_{r\in\mathcal{R}}$ and $\bs{f}_{r}(\bs{z}_r^*)=\{{f}_{\tilde{\zeta},n}({z}_{r,n}^*)\}_{n\in\mathcal{N}}$.
That is,  for all $r\in\mathcal{R}$,
\begin{align}
    \frac{\partial \bs{p}^*}{\partial {\rho}_r}=&
    \left(\sum_{r'\in\mathcal{R}}D_{r'}(\nabla^2_{\bs{z}_{r'}}U_{r'}(\bs{z}_{r'}))^{-1}D_{r'}\right)^{-1}\bs{f}_r(\bs{z}_{r}).
    \label{z1}
\end{align}
Hence, we have
\begin{align}
   &\Big|\bs{f}_r(\bs{z}_r)^T \frac{\partial \bs{p}^*}{\partial {\rho}_r}\Big|\nonumber\\
   =~&\Big|\bs{f}_r(\bs{z}_{r})^T\left(\sum_{r'\in\mathcal{R}}D_{r'}(\nabla^2_{\bs{z}_{r'}}U_{r'}(\bs{z}_{r'}))^{-1}D_{r'}\right)^{-1}\bs{f}_r(\bs{z}_{r})\Big|,\nonumber\\
   \overset{(a)}{\leq}~& \Big|\bs{f}_r(\bs{z}_{r})^T\frac{\sum_{r'\in\mathcal{R}}D^{-1}_{r'}\nabla^2_{\bs{z}_{r'}}U_{r'}(\bs{z}_{r'})D^{-1}_{r'}}{I}\bs{f}_r(\bs{z}_{r})\Big|,\nonumber\\
   \leq ~&\Big|\bs{f}_r(\bs{z}_{r})^T\frac{\sum_{r'\in\mathcal{R}}\nabla^2_{\bs{z}_{r'}}U_{r'}(\bs{z}_{r'})}{ I \cdot L_f^2 \min_r\rho_r^2}\bs{f}_r(\bs{z}_{r})\Big|,\nonumber\\
   \leq ~&\frac{|\mathcal{Z}|\sum_{\theta\in\mathcal{T}}L_{\theta}}{ I \cdot L_f^2 \min_r\rho_r^2}||\bs{f}_r(\bs{z}_{r})||_2^2\leq \frac{|\mathcal{Z}|\sum_{\theta\in\mathcal{T}}L_{\theta}\sum_{n\in\mathcal{N}}C_n^2}{ I \cdot L_f^2 \min_r\rho_r^4}.
\end{align}
where $(a)$ is the due to the harmonic-mean-arithmetic-mean matrix inequality \cite{matrix}.
From \eqref{z1}, if a type-$r$ user reports its type as $j$, the change in its payment,  $\Delta h_{r\rightarrow j}$, is bounded by 
\begin{align}
    \Delta h_{r\rightarrow j}&\leq \max\left|\bs{f}_r(\bs{z}_r)^T \frac{\partial \bs{p}^*}{\partial {\rho}_r}\right| |\Delta \rho_r|+\max \left|\bs{f}_j(\bs{z}_j)^T \frac{\partial \bs{p}^*}{\partial {\rho}_j} \right| |\Delta\rho_j|,\nonumber\\
    &= \frac{2}{I}\max\left|\bs{f}_r(\bs{z}_r)^T \frac{\partial \bs{p}^*}{\partial {\rho}_r}\right|,\nonumber\\
    &\leq \frac{2}{I^2}\frac{|\mathcal{Z}|\sum_{\theta\in\mathcal{T}}L_{\theta}\sum_{n\in\mathcal{N}}C_n^2}{ L_f^2 \min_r\rho_r^4}.\label{hahaha}
\end{align}

Finally, each type-$r_i=(\theta_i,\zeta_i)$ agent's payoff satisfies that, for all other report $\hat{r}_i=(\hat{\theta}_i,\hat{\zeta}_i)\in\mathcal{R}$, we have
\begin{align}
    &U(\theta_i,\bs{z}_{r_i}^*(r_i,\bs{r}_{-i}))-
    h_i(r_i,\bs{r}_{-i})+\frac{2}{I^2}\frac{|\mathcal{Z}|\sum_{\theta\in\mathcal{T}}L_{\theta}\sum_{n\in\mathcal{N}}C_n^2}{  L_f^2 \min_r\rho_r^4}\nonumber\\
    \overset{(a)}{\geq} &U(\hat{\theta}_i,\bs{z}_{\hat{r}_i}^*(\hat{r}_i,\bs{r}_{-i}))-
    h_i(
    \hat{r}_i,\bs{r}_{-i})+\frac{2}{I^2}\frac{|\mathcal{Z}|\sum_{\theta\in\mathcal{T}}L_{\theta}\sum_{n\in\mathcal{N}}C_n^2}{ L_f^2 \min_r\rho_r^4}\nonumber\\
    \overset{(b)}{\geq} & U(\hat{\theta}_i,\bs{z}_{\hat{r}_i}^*(\hat{\theta}_i,\bs{r}_{-i}))-
    (h_i(\hat{\theta}_i,\bs{r}_{-i})-\Delta h_{{r_i}\rightarrow \hat{r_i}})\nonumber\\
    % (p^*(\theta_i,{\bs{\theta}}_{-i})-\Delta p^*_{t\rightarrow j}) z_j^*(j,{\bs{\theta}}_{-i}),~\forall j\in\mathcal{T},\nonumber\\
    \geq & P_i(\hat{r}_i,\bs{r}_{-i}),
\end{align}
where (a) is due to Proposition \ref{P1}; (b) comes from \eqref{hahaha}. We complete the proof.

\end{document}